\title{Fast Greedy MAP Inference for Determinantal Point Process to Improve Recommendation Diversity}
\author{
  Laming Chen \\
  Hulu LLC \\
  Beijing, China \\
  \texttt{laming.chen@hulu.com} \\
  \And
  Guoxin Zhang\thanks{This work was conducted while the author was with Hulu.} \\
  Kuaishou \\
  Beijing, China \\
  \texttt{zgx.net@qq.com} \\
  \And
  Hanning Zhou \\
  Hulu LLC \\
  Beijing, China \\
  \texttt{eric.zhou@hulu.com} \\
}
\DeclareMathOperator{\argmax}{arg\,max}
\DeclareMathOperator*{\mean}{mean}
\newtheorem{theorem}{Theorem}
\begin{document}

\maketitle

\begin{abstract}
The determinantal point process (DPP) is an elegant probabilistic model of repulsion
with applications in various machine learning tasks including summarization and search.
However, the maximum a posteriori (MAP) inference for DPP which plays an important role in many applications is NP-hard,
and even the popular greedy algorithm can still be too computationally expensive to be used in large-scale real-time scenarios.
To overcome the computational challenge, in this paper, we propose a novel algorithm to greatly accelerate the greedy MAP inference for DPP.
In addition, our algorithm also adapts to scenarios where the repulsion is only required among nearby few items in the result sequence.
We apply the proposed algorithm to generate relevant and diverse recommendations.
Experimental results show that our proposed algorithm is significantly faster than state-of-the-art competitors,
and provides a better relevance-diversity trade-off on several public datasets, which is also confirmed in an online A/B test.
\end{abstract}

\section{Introduction}
\label{sec:intro}

The determinantal point process (DPP) was first introduced in \cite{macchi1975coincidence} to give the distributions of fermion systems in thermal equilibrium.
The repulsion of fermions is described precisely by DPP, making it natural for modeling diversity.
Besides its early applications in quantum physics and random matrices \cite{mehta1960density},
it has also been recently applied to various machine learning tasks such as multiple-person pose estimation \cite{kulesza2010structured},
image search \cite{kulesza2011k}, document summarization \cite{kulesza2011learning},
video summarization \cite{gong2014diverse}, product recommendation \cite{gillenwater2014expectation}, and tweet timeline generation \cite{yao2016tweet}.
Compared with other probabilistic models such as the graphical models, one primary advantage of DPP is that it admits polynomial-time algorithms for many types of inference,
including conditioning and sampling \cite{kulesza2012determinantal}.

One exception is the important maximum a posteriori (MAP) inference, i.e., finding the set of items with the highest probability, which is NP-hard \cite{ko1995exact}.
Consequently, approximate inference methods with low computational complexity are preferred.
A near-optimal MAP inference method for DPP is proposed in \cite{gillenwater2012near}.
However, this algorithm is a gradient-based method with high computational complexity for evaluating the gradient in each iteration,
making it impractical for large-scale real-time applications.
Another method is the widely used greedy algorithm \cite{nemhauser1978analysis}, justified by the fact that the log-probability of set in DPP is submodular.
Despite its relatively weak theoretical guarantees \cite{ccivril2009selecting},
it is widely used due to its promising empirical performance \cite{kulesza2011learning,gong2014diverse,yao2016tweet}.
Known exact implementations of the greedy algorithm \cite{gillenwater2012near,li2016gaussian} have $O(M^4)$ complexity, where $M$ is the total number of items.
Han et al.'s recent work \cite{han2017faster} reduces the complexity down to $O(M^3)$ by introducing some approximations, which sacrifices accuracy.
In this paper, we propose an \textit{exact} implementation of the greedy algorithm with $O(M^3)$ complexity,
and it runs much faster than the approximate one \cite{han2017faster} empirically.

The essential characteristic of DPP is that it assigns higher probability to sets of items that are diverse from each other \cite{kulesza2012determinantal}.
In some applications, the selected items are displayed as a sequence, and the negative interactions are restricted only among nearby few items.
For example, when recommending a long sequence of items to the user, each time only a small portion of the sequence catches the user's attention.
In this scenario, requiring items far away from each other to be diverse is unnecessary.
Developing fast algorithm for this scenario is another motivation of this paper.

\textbf{Contributions.} In this paper, we propose a novel algorithm to greatly accelerate the greedy MAP inference for DPP.
By updating the Cholesky factor incrementally, our algorithm reduces the complexity down to $O(M^3)$,
and runs in $O(N^2M)$ time to return $N$ items, making it practical to be used in large-scale real-time scenarios.
To the best of our knowledge, this is the first exact implementation of the greedy MAP inference for DPP with such a low time complexity.

In addition, we also adapt our algorithm to scenarios where the diversity is only required within a sliding window.
Supposing the window size is $w<N$, the complexity can be reduced to $O(wNM)$.
This feature makes it particularly suitable for scenarios where we need a long sequence of items diversified within a short sliding window.

Finally, we apply our proposed algorithm to the recommendation task.
Recommending diverse items gives the users exploration opportunities to discover novel and serendipitous items, and also enables the service to discover users' new interests.
As shown in the experimental results on public datasets and an online A/B test,
the DPP-based approach enjoys a favorable trade-off between relevance and diversity compared with the known methods.

\section{Background and Related Work}

\textbf{Notations.} Sets are represented by uppercase letters such as $Z$, and $\#Z$ denotes the number of elements in $Z$.
Vectors and matrices are represented by bold lowercase letters and bold uppercase letters, respectively.
$(\cdot)^{\top}$ denotes the transpose of the argument vector or matrix.
$\langle {\bf x},{\bf y}\rangle$ is the inner product of two vectors $\bf x$ and $\bf y$.
Given subsets $X$ and $Y$, ${\bf L}_{X,Y}$ is the sub-matrix of $\bf L$ indexed by $X$ in rows and $Y$ in columns.
For notation simplicity, we let ${\bf L}_{X,X}={\bf L}_{X}$, ${\bf L}_{X,\{i\}}={\bf L}_{X,i}$, and ${\bf L}_{\{i\},X}={\bf L}_{i,X}$.
$\det({\bf L})$ is the determinant of $\bf L$, and $\det({\bf L}_\emptyset)=1$ by convention.

\subsection{Determinantal Point Process}

DPP is an elegant probabilistic model with the ability to express negative interactions \cite{kulesza2012determinantal}.
Formally, a DPP $\mathcal{P}$ on a discrete set $Z=\{1, 2, \ldots, M\}$ is a probability measure on $2^Z$, the set of all subsets of $Z$.
When $\mathcal{P}$ gives nonzero probability to the empty set,
there exists a matrix ${\bf L}\in\mathbb{R}^{M\times M}$ such that for every subset $Y\subseteq Z$, the probability of $Y$ is $\mathcal{P}(Y)\propto\det({\bf L}_{Y})$,
where ${\bf L}$ is a real, positive semidefinite (PSD) kernel matrix indexed by the elements of $Z$.
Under this distribution, many types of inference tasks including marginalization, conditioning, and sampling can be performed in polynomial time,
except for the MAP inference
\begin{displaymath}
  Y_{\mathrm{map}}=\argmax_{Y\subseteq Z}\det({\bf L}_{Y}).
\end{displaymath}
In some applications, we need to impose a cardinality constraint on $Y$ to return a subset of fixed size with the highest probability,
resulting in the MAP inference for $k$-DPP \cite{kulesza2011k}.

Besides the works on the MAP inference for DPP introduced in Section~\ref{sec:intro},
some other works propose to draw samples and return the one with the highest probability.
In \cite{gillenwater2014approximate}, a fast sampling algorithm with complexity $O(N^2M)$ is proposed when the eigendecomposition of $\bf L$ is available.
Though \cite{gillenwater2014approximate} and our work both aim to accelerate existing algorithms,
the methodology is essentially different: we rely on incrementally updating the Cholesky factor.

\subsection{Diversity of Recommendation}

Improving the recommendation diversity has been an active field in machine learning and information retrieval.
Some works addressed this problem in a generic setting to achieve better trade-off between arbitrary relevance and dissimilarity functions
\cite{carbonell1998use,bradley2001improving,zhang2008avoiding,borodin2012max,he2012gender}.
However, they used only pairwise dissimilarities to characterize the overall diversity property of the list,
which may not capture some complex relationships among items
(e.g., the characteristics of one item can be described as a simple linear combination of another two).
Some other works tried to build new recommender systems to promote diversity through the learning process \cite{ahmed2012fair,su2013set,xia2017adapting},
but this makes the algorithms less generic and unsuitable for direct integration into existing recommender systems.

Some works proposed to define the similarity metric based on the taxonomy information
\cite{ziegler2005improving,agrawal2009diversifying,chandar2013preference,vargas2014coverage,ashkan2015optimal,teo2016adaptive}.
However, the semantic taxonomy information is not always available, and it may be unreliable to define similarity based on them.
Several other works proposed to define the diversity metric based on explanation \cite{yu2009takes},
clustering \cite{boim2011diversification,aytekin2014clustering,lee2017single}, feature space \cite{qin2013promoting},
or coverage \cite{wu2016relevance,puthiya2016coverage}.

In this paper, we apply the DPP model and our proposed algorithm to optimize the trade-off between relevance and diversity.
Unlike existing techniques based on pairwise dissimilarities, our method defines the diversity in the feature space of the entire subset.
Notice that our approach is essentially different from existing DPP-based methods for recommendation.
In \cite{gillenwater2014expectation,mariet2015fixed,gartrell2016bayesian,gartrell2017low},
they proposed to recommend complementary products to the ones in the shopping basket,
and the key is to learn the kernel matrix of DPP to characterize the relations among items.
By contrast, we aim to generate a relevant and diverse recommendation list through the MAP inference.

The diversity considered in our paper is different from the aggregate diversity in \cite{adomavicius2011maximizing,niemann2013new}.
Increasing aggregate diversity promotes long tail items, while improving diversity prefers diverse items in each recommendation list.

\section{Fast Greedy MAP Inference}
\label{sec:fast}

In this section, we present a fast implementation of the greedy MAP inference algorithm for DPP.
In each iteration, item
\begin{equation}\label{dpp:map}
  j=\argmax_{i\in{Z\setminus Y_{\mathrm{g}}}}\log\det({\bf L}_{Y_{\mathrm{g}}\cup\{i\}})-\log\det({\bf L}_{Y_{\mathrm{g}}})
\end{equation}
is added to $Y_{\mathrm{g}}$.
Since $\bf L$ is a PSD matrix, all of its principal minors are also PSD.
Suppose $\det({\bf L}_{Y_{\mathrm{g}}})>0$, and the Cholesky decomposition of ${\bf L}_{Y_{\mathrm{g}}}$ is ${\bf L}_{Y_{\mathrm{g}}}={\bf V}{\bf V}^\top$,
where ${\bf V}$ is an invertible lower triangular matrix.
For any $i\in{Z\setminus Y_{\mathrm{g}}}$, the Cholesky decomposition of ${\bf L}_{Y_{\mathrm{g}}\cup\{i\}}$ can be derived as
\begin{equation}\label{cho:dec:i}
  {\bf L}_{Y_{\mathrm{g}}\cup\{i\}}=
  \begin{bmatrix}
    {\bf L}_{Y_{\mathrm{g}}} & {\bf L}_{Y_{\mathrm{g}},i} \\
    {\bf L}_{i,Y_{\mathrm{g}}} & {\bf L}_{ii}
  \end{bmatrix}=
  \begin{bmatrix}
    {\bf V} & {\bf 0} \\
    {\bf c}_i & d_i
  \end{bmatrix}
  \begin{bmatrix}
    {\bf V} & {\bf 0} \\
    {\bf c}_i & d_i
  \end{bmatrix}^{\top},
\end{equation}
where row vector ${\bf c}_i$ and scalar $d_i\ge0$ satisfies
\begin{align}
  {\bf V}{\bf c}_i^{\top}&={\bf L}_{Y_{\mathrm{g}},i},\label{ci:direct} \\
  d_i^2&={\bf L}_{ii}-\|{\bf c}_i\|_2^2.\label{di:direct}
\end{align}
In addition, according to Equ. \eqref{cho:dec:i}, it can be derived that
\begin{equation}\label{di:intep}
  \det({\bf L}_{Y_{\mathrm{g}}\cup\{i\}})=\det({\bf V}{\bf V}^{\top})\cdot d_i^2=\det({\bf L}_{Y_{\mathrm{g}}})\cdot d_i^2.
\end{equation}
Therefore, Opt. \eqref{dpp:map} is equivalent to
\begin{equation}\label{dpp:fast}
  j=\argmax_{i\in{Z\setminus Y_{\mathrm{g}}}}\log(d_i^2).
\end{equation}
Once Opt. \eqref{dpp:fast} is solved, according to Equ. \eqref{cho:dec:i}, the Cholesky decomposition of ${\bf L}_{Y_{\mathrm{g}}\cup\{j\}}$ becomes
\begin{equation}\label{chole:decom}
  {\bf L}_{Y_{\mathrm{g}}\cup\{j\}}=
  \begin{bmatrix}
    {\bf V} & {\bf 0} \\
    {\bf c}_j & d_j
  \end{bmatrix}
  \begin{bmatrix}
    {\bf V} & {\bf 0} \\
    {\bf c}_j & d_j
  \end{bmatrix}^{\top},
\end{equation}
where ${\bf c}_j$ and $d_j$ are readily available.
The Cholesky factor of ${\bf L}_{Y_{\mathrm{g}}}$ can therefore be efficiently updated after a new item is added to $Y_{\mathrm{g}}$.

For each item $i$, ${\bf c}_i$ and $d_i$ can also be updated incrementally.
After Opt. \eqref{dpp:fast} is solved, define ${\bf c}'_i$ and ${d}'_i$ as the new vector and scalar of $i\in Z\setminus(Y_{\mathrm{g}}\cup\{j\})$.
According to Equ.~\eqref{ci:direct} and Equ. \eqref{chole:decom}, we have
\begin{equation}\label{ci:equation}
  \begin{bmatrix}
    {\bf V} & {\bf 0} \\
    {\bf c}_j & d_j
  \end{bmatrix}
  {{\bf c}'_i}^{\top}
  ={\bf L}_{Y_{\mathrm{g}}\cup\{j\},i}
  =\begin{bmatrix}
    {\bf L}_{Y_{\mathrm{g}},i} \\
    {\bf L}_{ji}
  \end{bmatrix}.
\end{equation}
Combining Equ. \eqref{ci:equation} with Equ. \eqref{ci:direct}, we conclude
\begin{displaymath}
  {\bf c}'_i=\begin{bmatrix}
    {\bf c}_i & ({\bf L}_{ji}-\langle{\bf c}_j,{\bf c}_i\rangle)/d_j
  \end{bmatrix}
  \doteq \begin{bmatrix}{\bf c}_i & e_i\end{bmatrix}.
\end{displaymath}
Then Equ. \eqref{di:direct} implies
\begin{equation}\label{di:update}
  d_i^{\prime2}={\bf L}_{ii}-\|{\bf c}'_i\|_2^2={\bf L}_{ii}-\|{\bf c}_i\|_2^2-e_i^2=d_i^2-e_i^2.
\end{equation}

Initially, $Y_{\mathrm{g}}=\emptyset$, and Equ. \eqref{di:intep} implies $d_i^2=\det({\bf L}_{ii})={\bf L}_{ii}$.
The complete algorithm is summarized in Algorithm~\ref{sup:fast}.
The \textit{stopping criteria} is $d_j^2<1$ for unconstraint MAP inference or $\#Y_{\textrm{g}}>N$ when the cardinality constraint is imposed.
For the latter case, we introduce a small number $\varepsilon>0$ and add $d_j^2<\varepsilon$ to the \textit{stopping criteria} for numerical stability of calculating $1/d_j$.

\begin{algorithm}[tb]
	\caption{Fast Greedy MAP Inference}
	\label{sup:fast}
	\begin{algorithmic}[1]
		\State {\bfseries Input:} Kernel $\bf L$, \textit{stopping criteria}
		\State {\bfseries Initialize:} ${\bf c}_i=[]$, $d_i^2={\bf L}_{ii}$, $j=\argmax_{i\in{Z}}\log(d_i^2)$, $Y_{\textrm{g}}=\{j\}$
		\While{\textit{stopping criteria} not satisfied}
		\For{$i\in{Z\setminus Y_{\mathrm{g}}}$}
		\State $e_i=({\bf L}_{ji}-\langle{\bf c}_j,{\bf c}_i\rangle)/d_j$
		\State ${\bf c}_i=[{\bf c}_i \quad e_i]$, $d_i^2=d_i^2-e_i^2$
		\EndFor
		\State $j=\argmax_{i\in{Z\setminus Y_{\mathrm{g}}}}\log(d_i^2)$, $Y_{\textrm{g}}=Y_{\textrm{g}}\cup\{j\}$
		\EndWhile
		\State {\bfseries Return:} $Y_{\mathrm{g}}$
	\end{algorithmic}
\end{algorithm}

In the $k$-th iteration, for each item $i\in Z\setminus Y_{\mathrm{g}}$, updating ${\bf c}_i$ and $d_i$ involve the inner product of two vectors of length $k$,
resulting in overall complexity $O(kM)$.
Therefore, Algorithm~\ref{sup:fast} runs in $O(M^3)$ time for unconstraint MAP inference and $O(N^2M)$ to return $N$ items.
Notice that this is achieved by additional $O(NM)$ (or $O(M^2)$ for the unconstraint case) space for ${\bf c}_i$ and $d_i$.

\section{Diversity within Sliding Window}

In some applications, the selected set of items are displayed as a sequence, and the diversity is only required within a sliding window.
Denote the window size as $w$.
We modify Opt. \eqref{dpp:map} to
\begin{equation}\label{map:local}
  j=\argmax_{i\in{Z\setminus Y_{\mathrm{g}}}}\log\det({\bf L}_{Y_{\mathrm{g}}^w\cup\{i\}})-\log\det({\bf L}_{Y_{\mathrm{g}}^w}),
\end{equation}
where $Y_{\mathrm{g}}^w\subseteq Y_{\mathrm{g}}$ contains $w-1$ most recently added items.
When $\#Y_{\mathrm{g}}\ge w$, a simple modification of method \cite{li2016gaussian} solves Opt. \eqref{map:local} with complexity $O(w^2M)$.
We adapt our algorithm to this scenario so that Opt. \eqref{map:local} can be solved in $O(wM)$ time.

In Section~\ref{sec:fast}, we showed how to efficiently select a new item when $\bf V$, ${\bf c}_i$, and $d_i$ are available.
For Opt. \eqref{map:local}, ${\bf V}$ is the Cholesky factor of ${\bf L}_{Y_{\mathrm{g}}^w}$.
After Opt. \eqref{map:local} is solved, we can similarly update $\bf V$, ${\bf c}_i$, and $d_i$ for ${\bf L}_{Y_{\mathrm{g}}^w\cup\{j\}}$.
When the number of items in $Y_{\mathrm{g}}^w$ is $w-1$, to update $Y_{\mathrm{g}}^w$,
we also need to remove the earliest added item in $Y_{\mathrm{g}}^w$.
The detailed derivations of updating ${\bf V}$, ${\bf c}_i$, and $d_i$ when the earliest added item is removed are given in the supplementary material.

\begin{algorithm}[t]
	\caption{Fast Greedy MAP Inference with a Sliding Window}
	\label{diverse:nearby}
	\begin{algorithmic}[1]
		\State {\bfseries Input:} Kernel $\bf L$, window size $w$, \textit{stopping criteria}
		\State {\bfseries Initialize:} ${\bf V}=[]$, ${\bf c}_i=[]$, $d_i^2={\bf L}_{ii}$, $j=\argmax_{i\in{Z}}\log(d_i^2)$, $Y_{\textrm{g}}=\{j\}$
		\While{\textit{stopping criteria} not satisfied}
		\State Update ${\bf V}$ according to Equ. \eqref{chole:decom}
		\For{$i\in{Z\setminus Y_{\mathrm{g}}}$}
		\State $e_i=({\bf L}_{ji}-\langle{\bf c}_j,{\bf c}_i\rangle)/d_j$
		\State ${\bf c}_i=[{\bf c}_i \quad e_i]$, $d_i^2=d_i^2-e_i^2$
		\EndFor
		\If{$\#Y_{\mathrm{g}}\ge w$}
		\State ${\bf v}={\bf V}_{2:,1}$, ${\bf V}={\bf V}_{2:}$, $a_i={\bf c}_{i,1}$, ${\bf c}_i={\bf c}_{i,2:}$
		\For{$l=1,\cdots,w-1$}
		\State $t^2={\bf V}_{ll}^2+{\bf v}_l^2$
		\State ${\bf V}_{l+1:,l}=({\bf V}_{l+1:,l}{\bf V}_{ll}+{\bf v}_{l+1:}{\bf v}_l)/t$, ${\bf v}_{l+1:}=({\bf v}_{l+1:}t-{\bf V}_{l+1:,l}{\bf v}_l)/{\bf V}_{ll}$
		\For{$i\in{Z\setminus Y_{\mathrm{g}}}$}
		\State ${\bf c}_{i,l}=({\bf c}_{i,l}{\bf V}_{ll}+a_{i}{\bf v}_l)/t$, $a_i=(a_it-{\bf c}_{i,l}{\bf v}_l)/{\bf V}_{ll}$
		\EndFor
		\State ${\bf V}_{ll}=t$
		\EndFor
		\For{$i\in{Z\setminus Y_{\mathrm{g}}}$}
		\State $d_i^2=d_i^2+a_i^2$
		\EndFor
		\EndIf
		\State $j=\argmax_{i\in{Z\setminus Y_{\mathrm{g}}}}\log(d_i^2)$, $Y_{\textrm{g}}=Y_{\textrm{g}}\cup\{j\}$
		\EndWhile
		\State {\bfseries Return:} $Y_{\mathrm{g}}$
	\end{algorithmic}
\end{algorithm}

The complete algorithm is summarized in Algorithm~\ref{diverse:nearby}.
Line 10-21 shows how to update $\bf V$, ${\bf c}_i$, and $d_i$ in place after the earliest item is removed.
In the $k$-th iteration where $k\ge w$, updating $\bf V$, all ${\bf c}_i$ and $d_i$ require $O(w^2)$, $O(wM)$, and $O(M)$ time, respectively.
The overall complexity of Algorithm~\ref{diverse:nearby} is $O(wNM)$ to return $N\ge w$ items.
Numerical stability is discussed in the supplementary material.
\section{Improving Recommendation Diversity}

In this section, we describe a DPP-based approach for recommending relevant and diverse items to users.
For a user $u$, the profile item set $P_u$ is defined as the set of items that the user likes.
Based on $P_u$, a recommender system recommends items $R_u$ to the user.

The approach takes three inputs: a candidate item set $C_u$, a score vector ${\bf r}_u$ which indicates how relevant the items in $C_u$ are,
and a PSD matrix $\bf S$ which measures the similarity of each pair of items.
The first two inputs can be obtained from the internal results of many traditional recommendation algorithms.
The third input, similarity matrix $\bf S$, can be obtained based on the attributes of items, the interaction relations with users, or a combination of both.
This approach can be regarded as a ranking algorithm balancing the relevance of items and their similarities.

To apply the DPP model in the recommendation task, we need to construct the kernel matrix.
As revealed in \cite{kulesza2012determinantal}, the kernel matrix can be written as a Gram matrix,
${\bf L}={\bf B}^{\top}{\bf B}$, where the columns of $\bf B$ are vectors representing the items.
We can construct each column vector ${\bf B}_i$ as the product of the item score $r_i\ge0$
and a normalized vector ${\bf f}_i\in\mathbb{R}^D$ with $\|{\bf f}_i\|_2=1$.
The entries of kernel $\bf L$ can be written as
\begin{equation}\label{kernel:entry}
  {\bf L}_{ij}=\langle{\bf B}_i,{\bf B}_j\rangle=\langle r_i{\bf f}_i,r_j{\bf f}_j\rangle=r_ir_j\langle{\bf f}_i,{\bf f}_j\rangle.
\end{equation}
We can think of $\langle{\bf f}_i,{\bf f}_j\rangle$ as measuring the similarity between item $i$ and item $j$, i.e., $\langle{\bf f}_i,{\bf f}_j\rangle={\bf S}_{ij}$.
Therefore, the kernel matrix for user $u$ can be written as ${\bf L}=\mathrm{Diag}({\bf r}_u)\cdot{\bf S}\cdot\mathrm{Diag}({\bf r}_u)$,
where $\mathrm{Diag}({\bf r}_u)$ is a diagonal matrix whose diagonal vector is ${\bf r}_u$.
The log-probability of $R_u$ is
\begin{equation}\label{dpp:decom}
  \log\det({\bf L}_{R_u})=\sum_{i\in R_u}\log({\bf r}_{u,i}^2)+\log\det({\bf S}_{R_u}).
\end{equation}
The second term in Equ. \eqref{dpp:decom} is maximized when the item representations of $R_u$ are orthogonal, and therefore it promotes diversity.
It clearly shows how the DPP model incorporates the relevance and diversity of the recommended items.

A nice feature of methods in \cite{carbonell1998use,zhang2008avoiding,borodin2012max}
is that they involve a tunable parameter which allows users to adjust the trade-off between relevance and diversity.
According to Equ.~\eqref{dpp:decom}, the original DPP model does not offer such a mechanism.
We modify the log-probability of $R_u$ to
\begin{displaymath}
  \log\mathcal{P}(R_u)\propto\theta\cdot\sum_{i\in R_u}{\bf r}_{u,i}+(1-\theta)\cdot\log\det({\bf S}_{R_u}),
\end{displaymath}
where $\theta\in[0,1]$.
This corresponds to a DPP with kernel ${\bf L}'=\mathrm{Diag}(\exp(\alpha{\bf r}_u))\cdot{\bf S}\cdot\mathrm{Diag}(\exp(\alpha{\bf r}_u))$,
where $\alpha=\theta/(2(1-\theta))$.
We can also get the marginal gain of log-probability $\log\mathcal{P}(R_u\cup\{i\})-\log\mathcal{P}(R_u)$ as
\begin{equation}\label{marg:new}
  \theta\cdot{\bf r}_{u,i}+(1-\theta)\cdot(\log\det({\bf S}_{R_u\cup\{i\}})-\log\det({\bf S}_{R_u})).
\end{equation}
Then Algorithm~\ref{sup:fast} and Algorithm~\ref{diverse:nearby} can be easily modified to maximize \eqref{marg:new} with kernel matrix $\bf S$.

Notice that we need the similarity ${\bf S}_{ij}\in[0,1]$ for the recommendation task, where $0$ means the most diverse and $1$ means the most similar.
This may be violated when the inner product of normalized vectors $\langle{\bf f}_i,{\bf f}_j\rangle$ can take negative values.
In the extreme case, the most diverse pair ${\bf f}_i=-{\bf f}_j$, but the determinant of the corresponding sub-matrix is $0$, same as ${\bf f}_i={\bf f}_j$.
To guarantee nonnegativity, we can take a linear mapping while keeping $\bf S$ a PSD matrix, e.g.,
\begin{displaymath}
  {\bf S}_{ij}=\frac{1+\langle{\bf f}_i,{\bf f}_j\rangle}{2}=\left\langle\frac{1}{\sqrt{2}}\begin{bmatrix} 1 \\ {\bf f}_i \end{bmatrix},\frac{1}{\sqrt{2}}\begin{bmatrix} 1 \\ {\bf f}_j \end{bmatrix}\right\rangle\in[0,1].
\end{displaymath}

\section{Experimental Results}

In this section, we evaluate and compare our proposed algorithms on synthetic dataset and real-world recommendation tasks.
Algorithms are implemented in Python with vectorization.
The experiments are performed on a laptop with $2.2$GHz Intel Core i$7$ and $16$GB RAM.

\subsection{Synthetic Dataset}

\begin{figure}[t]
\begin{center}
	\begin{minipage}{.24\textwidth}
		\includegraphics[width=1.35in]{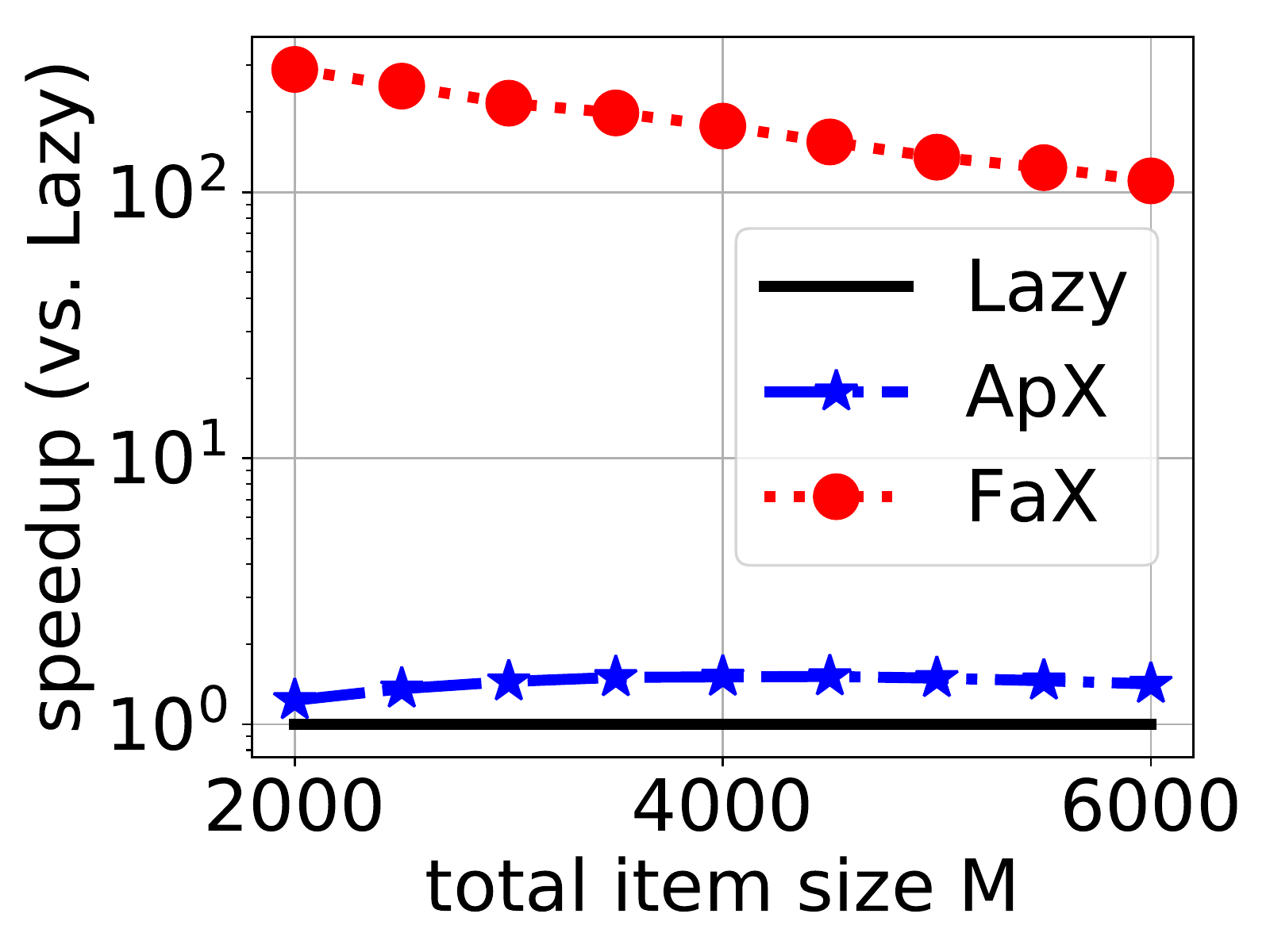}
	\end{minipage}
	\begin{minipage}{.24\textwidth}
		\includegraphics[width=1.35in]{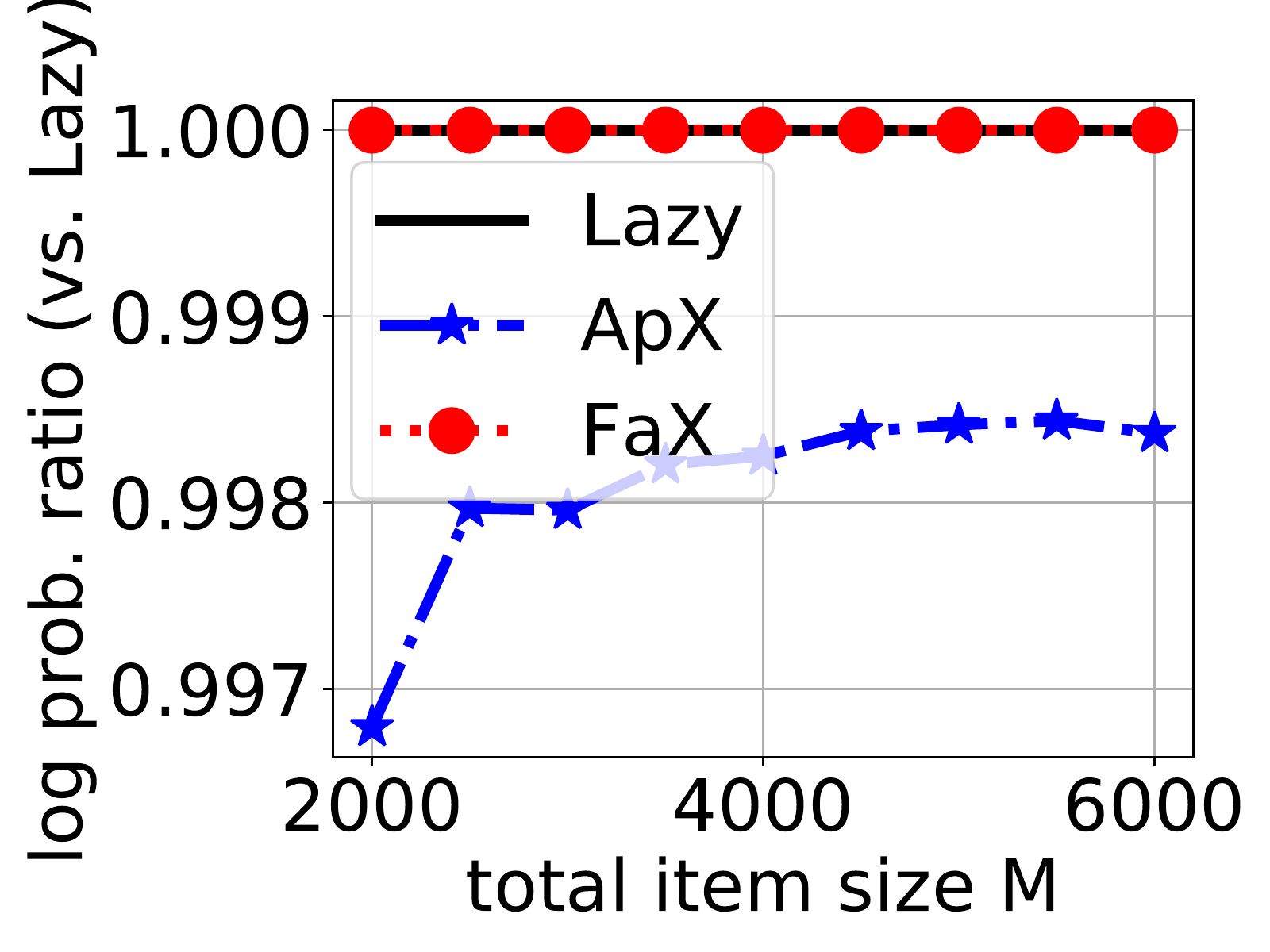}
	\end{minipage}
	\begin{minipage}{.24\textwidth}
		\includegraphics[width=1.35in]{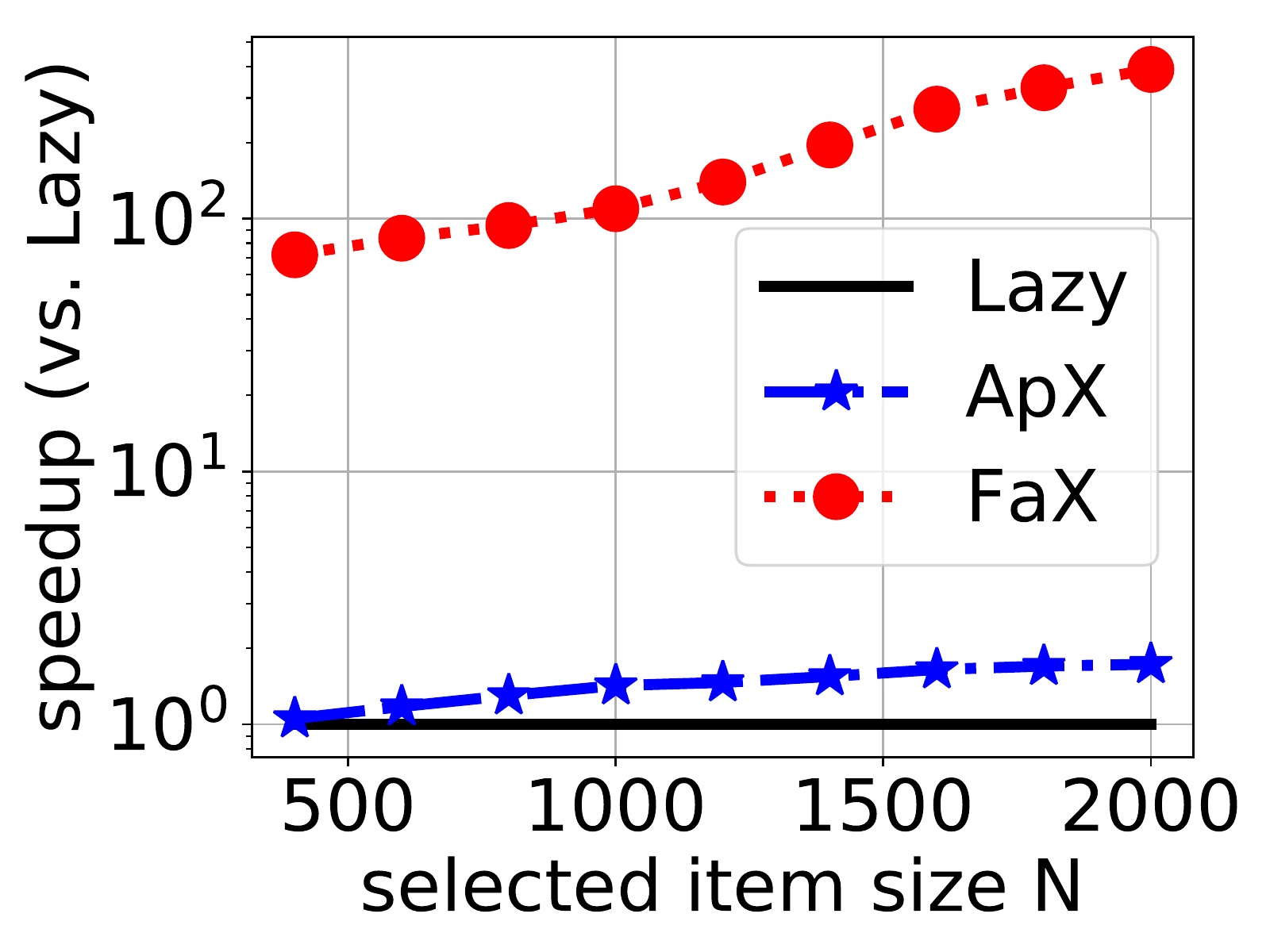}
	\end{minipage}
	\begin{minipage}{.24\textwidth}
		\includegraphics[width=1.35in]{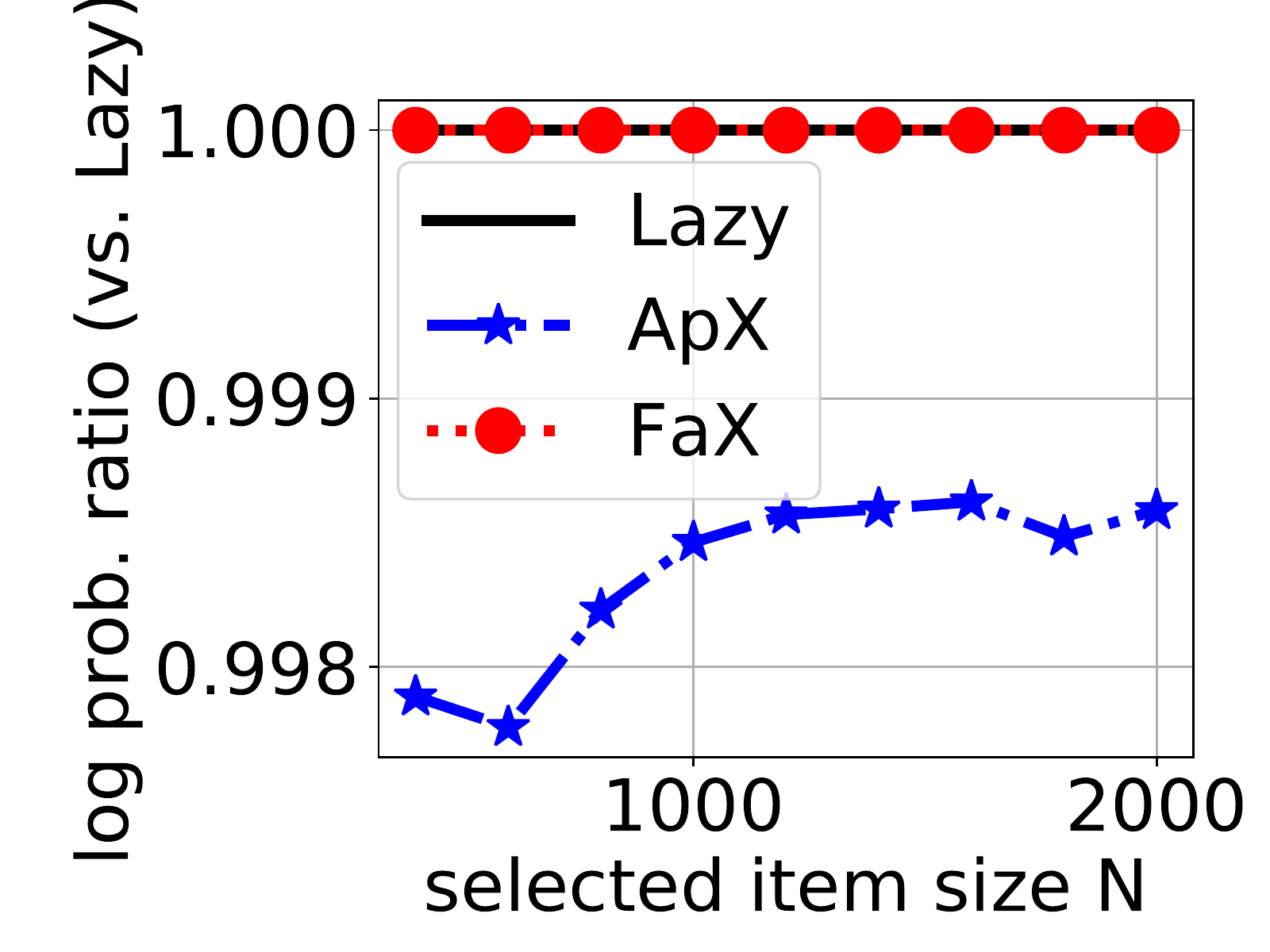}
	\end{minipage}
	\caption{Comparison of Lazy, ApX, and our FaX under different $M$ when $N=1000$ (left), and under different $N$ when $M=6000$ (right).}
	\label{dpp:compare}
\end{center}
\vskip -0.1in
\end{figure}

In this subsection, we evaluate the performance of our Algorithm~\ref{sup:fast} on the MAP inference for DPP.
We follow the experimental setup in \cite{han2017faster}.
The entries of the kernel matrix satisfy Equ. \eqref{kernel:entry},
where $r_i=\exp(0.01x_i+0.2)$ with $x_i\in\mathbb{R}$ drawn from the normal distribution $\mathcal{N}(0,1)$,
and ${\bf f}_i\in\mathbb{R}^D$ with $D$ same as the total item size $M$ and entries drawn i.i.d. from $\mathcal{N}(0,1)$ and then normalized.

Our proposed faster exact algorithm (FaX) is compared with
Schur complement combined with lazy evaluation (Lazy) \cite{minoux1978accelerated}
and faster approximate algorithm (ApX) \cite{han2017faster}.
The parameters of the reference algorithms are chosen as suggested in \cite{han2017faster}.
The gradient-based method in \cite{gillenwater2012near} and the double greedy algorithm in \cite{buchbinder2015tight}
are not compared because as reported in \cite{han2017faster}, they performed worse than ApX.
We report the speedup over Lazy of each algorithm, as well as the ratio of log-probability \cite{han2017faster}
\begin{displaymath}
  \log\det{\bf L}_{Y}/\log\det{\bf L}_{Y_{\textrm{Lazy}}},
\end{displaymath}
where $Y$ and $Y_{\mathrm{Lazy}}$ are the outputs of an algorithm and Lazy, respectively.
We compare these metrics when the total item size $M$ varies from $2000$ to $6000$ with the selected item size $N=1000$,
and when $N$ varies from $400$ to $2000$ with $M=6000$.
The results are averaged over $10$ independent trials, and shown in Figure~\ref{dpp:compare}.
In both cases, FaX runs significantly faster than ApX, which is the state-of-the-art fast greedy MAP inference algorithm for DPP.
FaX is about $100$ times faster than Lazy, while ApX is about $3$ times faster, as reported in \cite{han2017faster}.
The accuracy of FaX is the same as Lazy, because they are exact implementations of the greedy algorithm.
ApX loses about $0.2\%$ accuracy.

\subsection{Short Sequence Recommendation}

In this subsection, we evaluate the performance of Algorithm~\ref{sup:fast}
to recommend short sequences of items to users on the following two public datasets.

\textbf{Netflix Prize}\footnote{Netflix Prize website, http://www.netflixprize.com/}:
This dataset contains users' ratings of movies.
We keep ratings of four or higher and binarize them.
We only keep users who have watched at least $10$ movies and movies that are watched by at least $100$ users.
This results in $436,674$ users and $11,551$ movies with $56,406,404$ ratings.

\textbf{Million Song Dataset} \cite{bertin2011million}:
This dataset contains users' play counts of songs.
We binarize play counts of more than once.
We only keep users who listen to at least $20$ songs and songs that are listened to by at least $100$ users.
This results in $213,949$ users and $20,716$ songs with $8,500,651$ play counts.

For each dataset, we construct the test set by randomly selecting one interacted item for each user, and use the rest data for training.
We adopt an item-based recommendation algorithm \cite{karypis2001evaluation} on the training set
to learn an item-item PSD similarity matrix $\bf S$.
For each user, the profile set $P_u$ consists of the interacted items in the training set,
and the candidate set $C_u$ is formed by the union of $50$ most similar items of each item in $P_u$.
The median of $\#C_u$ is $735$ and $811$ on Netflix Prize and Million Song Dataset, respectively.
For any item in $C_u$, the relevance score is the aggregated similarity to all items in $P_u$ \cite{hurley2011novelty}.
With $\bf S$, $C_u$, and the score vector ${\bf r}_u$, algorithms recommend $N=20$ items.

Performance metrics of recommendation include mean reciprocal rank (MRR) \cite{voorhees1999trec},
intra-list average distance (ILAD) \cite{zhang2008avoiding}, and intra-list minimal distance (ILMD).
They are defined as
\begin{displaymath}
  \mathrm{MRR}=\mean_{u\in U}p_u^{-1},\quad
  \mathrm{ILAD}=\mean_{u\in U}\mean_{i,j\in R_u,i\ne j}(1-{\bf S}_{ij}),\quad
  \mathrm{ILMD}=\mean_{u\in U}\min_{i,j\in R_u,i\ne j}(1-{\bf S}_{ij}),
\end{displaymath}
where $U$ is the set of all users, and $p_u$ is the smallest rank position of items in the test set.
MRR measures relevance, while ILAD and ILMD measure diversity.
We also compare the metric popularity-weighted recall (PW Recall) \cite{steck2011item} in the supplementary material.

Our DPP-based algorithm (DPP) is compared with maximal marginal relevance (MMR) \cite{carbonell1998use},
max-sum diversification (MSD) \cite{borodin2012max},
entropy regularizer (Entropy) \cite{qin2013promoting}, and coverage-based algorithm (Cover) \cite{puthiya2016coverage}.
They all involve a tunable parameter to adjust the trade-off between relevance and diversity.
For Cover, the parameter is $\gamma\in[0,1]$ which defines the saturation function $f(t)=t^{\gamma}$.

\begin{figure}[t]
	\begin{center}
		\begin{minipage}{.24\textwidth}
			\includegraphics[width=1.35in]{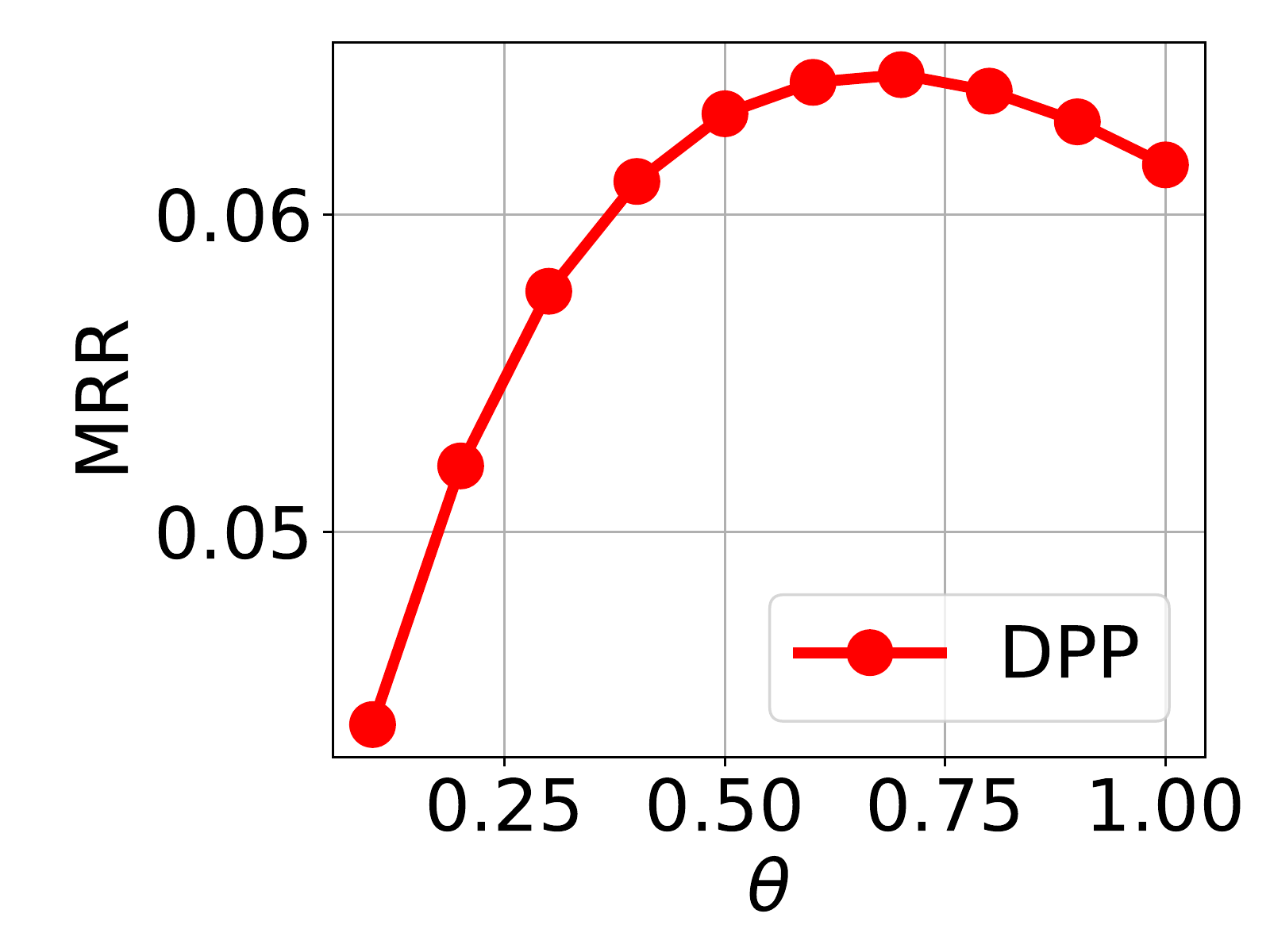}
		\end{minipage}
		\begin{minipage}{.24\textwidth}
			\includegraphics[width=1.35in]{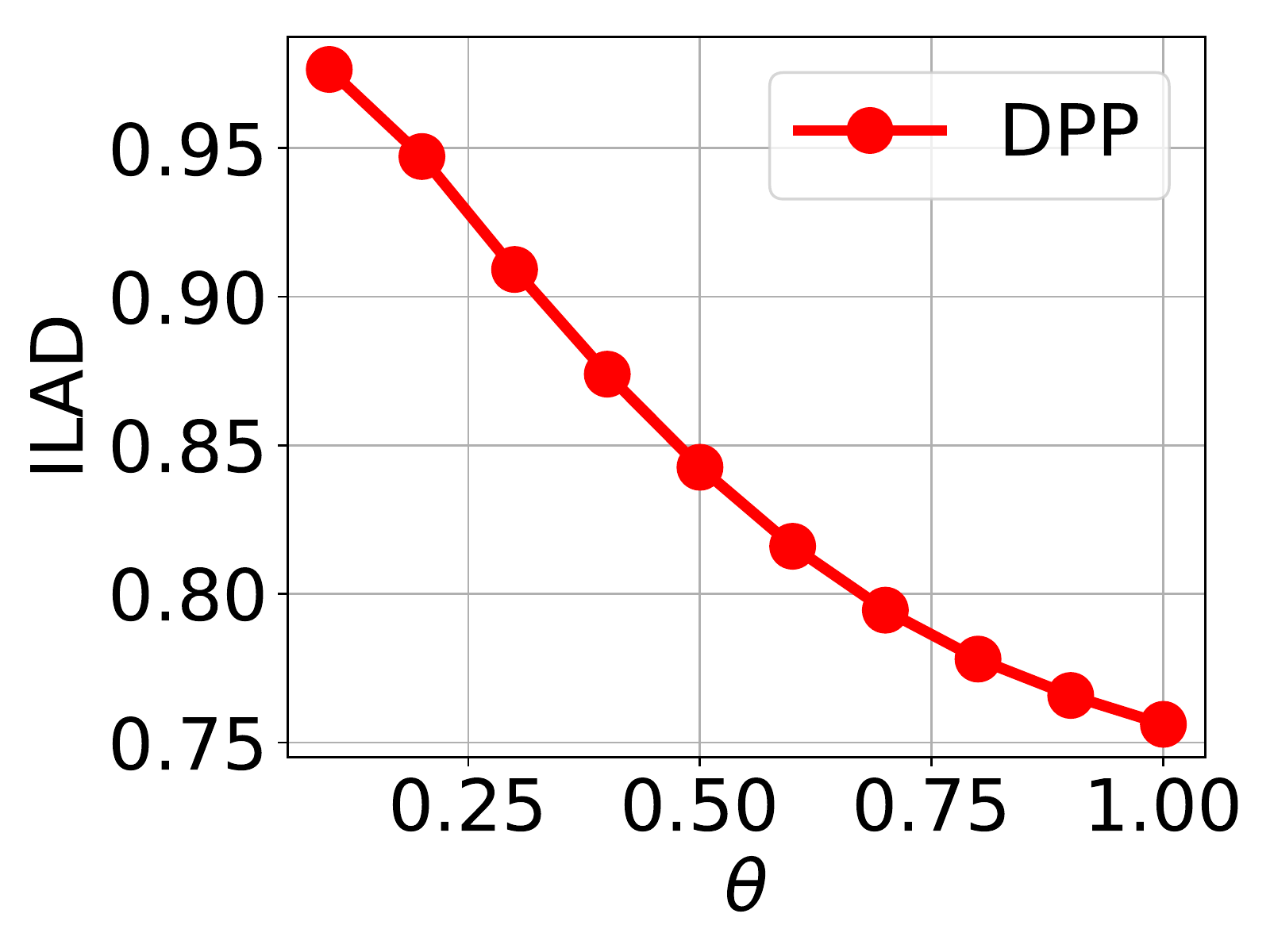}
		\end{minipage}
		\begin{minipage}{.24\textwidth}
			\includegraphics[width=1.35in]{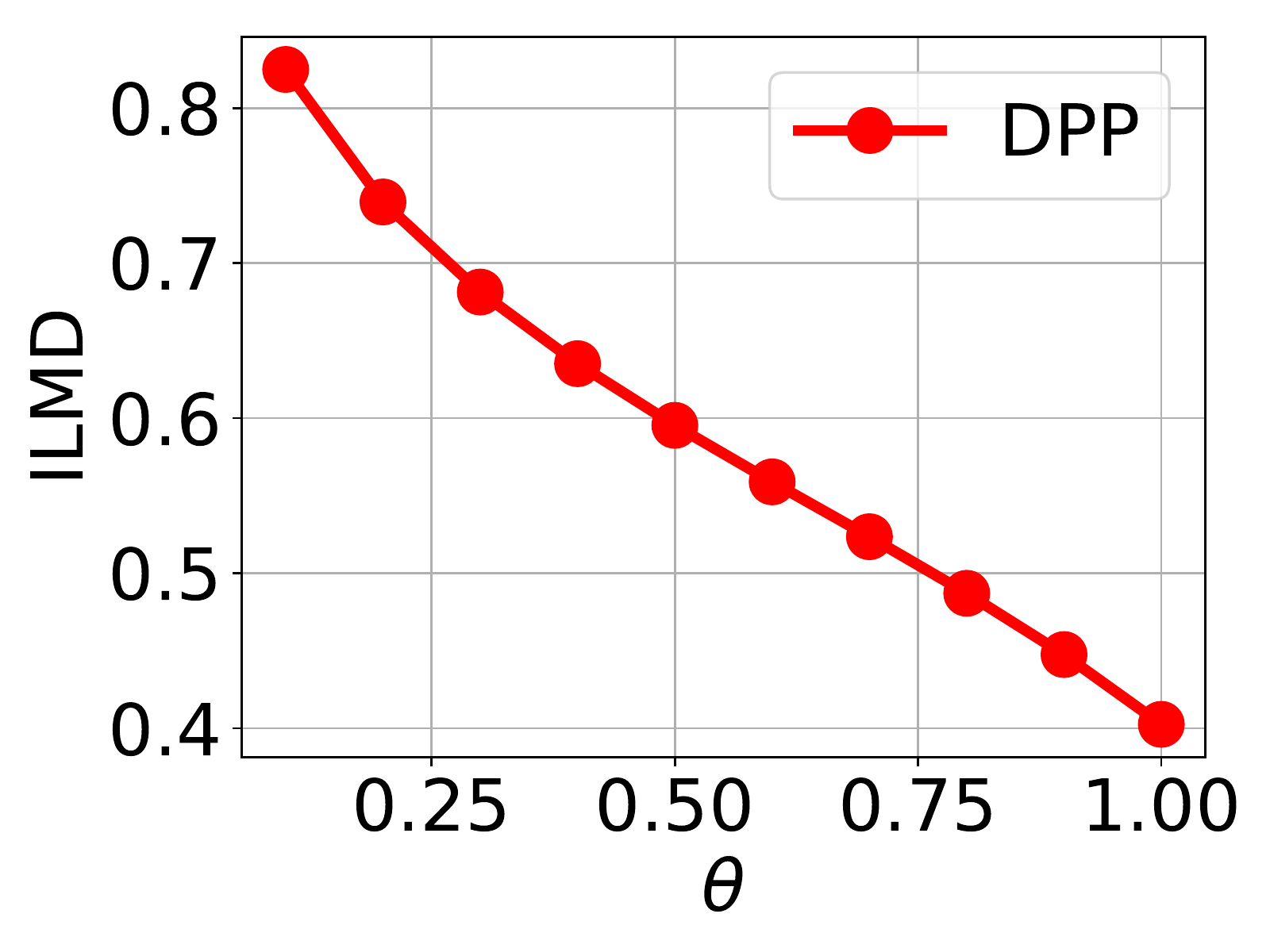}
		\end{minipage}
		\caption{Impact of trade-off parameter $\theta$ on Netflix dataset.}
		\label{theta:compare}
	\end{center}
	\vskip -0.1in
\end{figure}

\begin{figure}[t]
	\begin{center}
		\begin{minipage}{.24\textwidth}
			\includegraphics[width=1.35in]{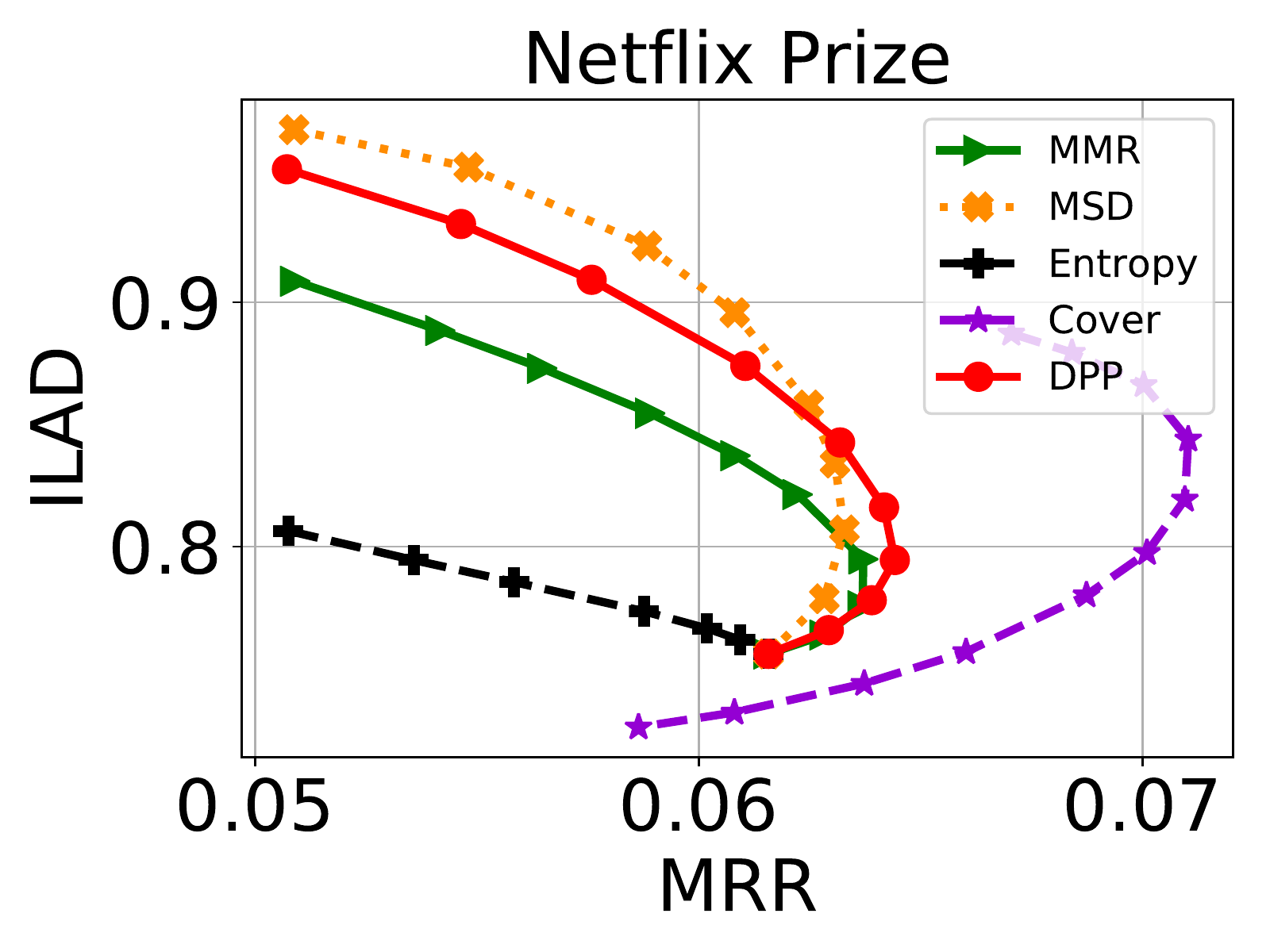}
		\end{minipage}
		\begin{minipage}{.24\textwidth}
			\includegraphics[width=1.35in]{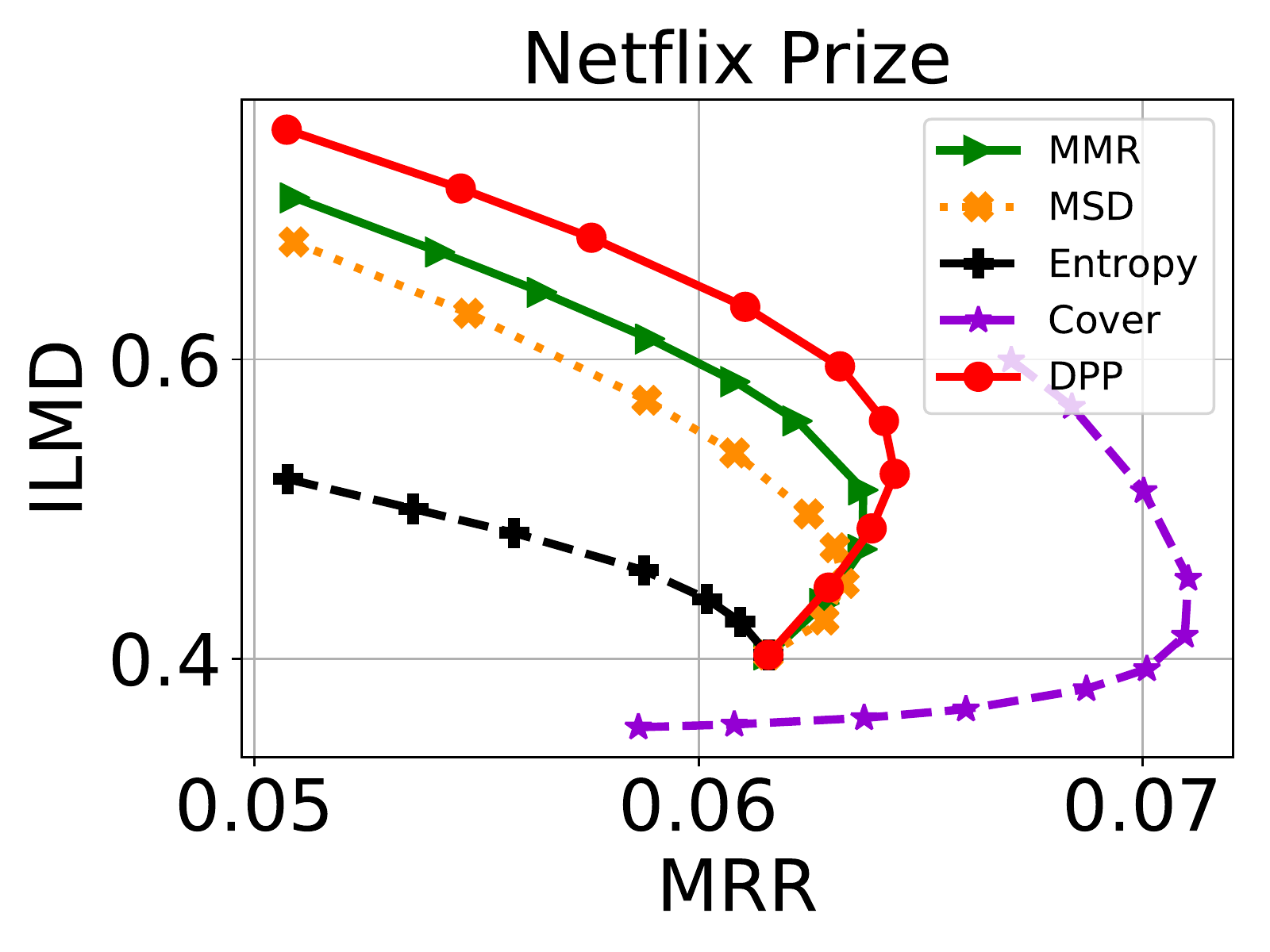}
		\end{minipage}
		\begin{minipage}{.24\textwidth}
			\includegraphics[width=1.35in]{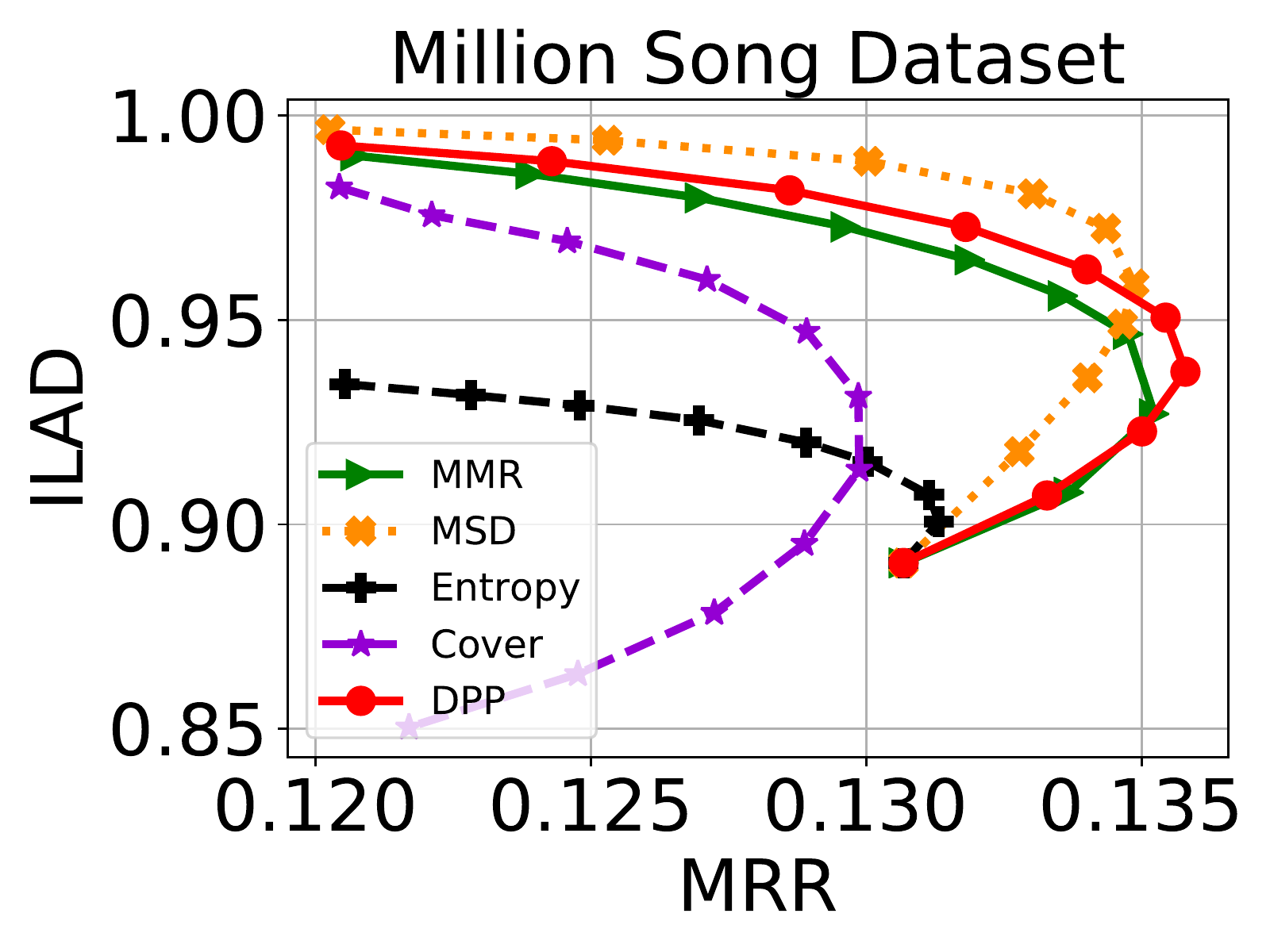}
		\end{minipage}
		\begin{minipage}{.24\textwidth}
			\includegraphics[width=1.35in]{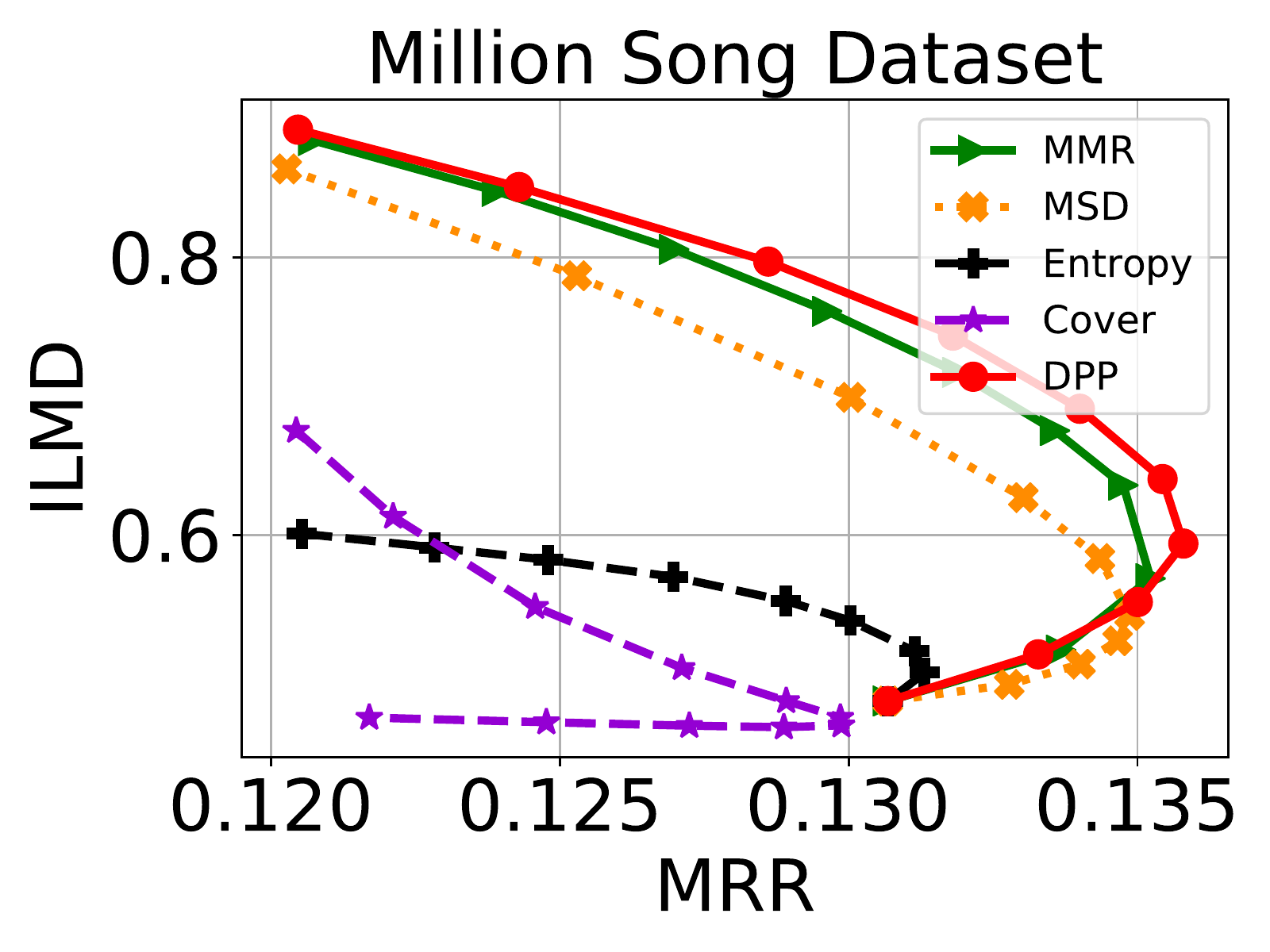}
		\end{minipage}
		\caption{Comparison of trade-off performance between relevance and diversity
			under different choices of trade-off parameters
			on Netflix Prize (left) and Million Song Dataset (right).
			The standard error of MRR is about $0.0003$ and $0.0006$ on Netflix Prize and Million Song Dataset, respectively.}
		\label{one:out}
	\end{center}
	\vskip -0.05in
\end{figure}

\begin{table}[!t]
	\caption{Comparison of average / upper $99\%$ running time (in milliseconds).}
	\label{running:time}
	\begin{center}
		\begin{small}
			\begin{tabular}{cccccc}
				\toprule
				Dataset & MMR  & MSD & Entropy & Cover & DPP \\
				\midrule
				Netflix Prize & 0.23 / 0.50  & 0.21 / 0.41 & 200.74 / 2883.82 & 120.19 / 1332.21 & 0.73 / 1.75 \\
				Million Song Dataset & 0.23 / 0.41  & 0.22 / 0.34 & 26.45 / 168.12 & 23.76 / 173.64 & 0.76 / 1.46 \\
				\bottomrule
			\end{tabular}
		\end{small}
	\end{center}
	\vskip -0.1in
\end{table}

In the first experiment, we test the impact of trade-off parameter $\theta\in[0,1]$ of DPP on Netflix Prize.
The results are shown in Figure~\ref{theta:compare}.
As $\theta$ increases, MRR improves at first, achieves the best value when $\theta\approx0.7$, and then decreases a little bit.
ILAD and ILMD are monotonously decreasing as $\theta$ increases.
When $\theta=1$, DPP returns items with the highest relevance scores.
Therefore, taking moderate amount of diversity into consideration, better recommendation performance can be achieved.

In the second experiment, by varying the trade-off parameters,
the trade-off performance between relevance and diversity are compared in Figure~\ref{one:out}.
The parameters are chosen such that different algorithms have approximately the same range of MRR.
As can be seen, Cover performs the best on Netflix Prize but becomes the worst on Million Song Dataset.
Among the other algorithms, DPP enjoys the best relevance-diversity trade-off performance.
Their average and upper $99\%$ running time are compared in Table~\ref{running:time}.
MMR, MSD, and DPP run significantly faster than Entropy and Cover.
Since DPP runs in less than $2$ms with probability $99\%$, it can be used in real-time scenarios.

\begin{table}[t]
	\vskip 0.05in
	\caption{Performance improvement of MMR and DPP over Control in an online A/B test.}
	\label{ab:test}
	\begin{center}
		\begin{small}
			\begin{tabular}{ccc}
				\toprule
				Algorithm & Improvement of No. Titles Watched & Improvement of Watch Minutes \\
				\midrule
				MMR & $0.84\%$  & $0.86\%$ \\
				DPP & $1.33\%$  & $1.52\%$ \\
				\bottomrule
			\end{tabular}
		\end{small}
	\end{center}
	\vskip -0.05in
\end{table}

We conducted an online A/B test in a movie recommender system for four weeks.
For each user, candidate movies with relevance scores were generated by an online scoring model.
An offline matrix factorization algorithm \cite{koren2009matrix} was trained daily to generate movie representations which were used to get similarities.
For the control group, $5\%$ users were randomly selected and presented with $N=8$ movies with the highest relevance scores.
For the treatment group, another $5\%$ random users were presented with $N$ movies generated by DPP with a fine-tuned trade-off parameter.
Two online metrics, improvements of number of titles watched and watch minutes, are reported in Table~\ref{ab:test}.
The results are also compared with another $5\%$ randomly selected users using MMR.
As can be seen, DPP performed better compared with systems without diversification algorithm or with MMR.

\subsection{Long Sequence Recommendation}

\begin{figure}[!t]
	\begin{center}
		\begin{minipage}{.24\textwidth}
			\includegraphics[width=1.35in]{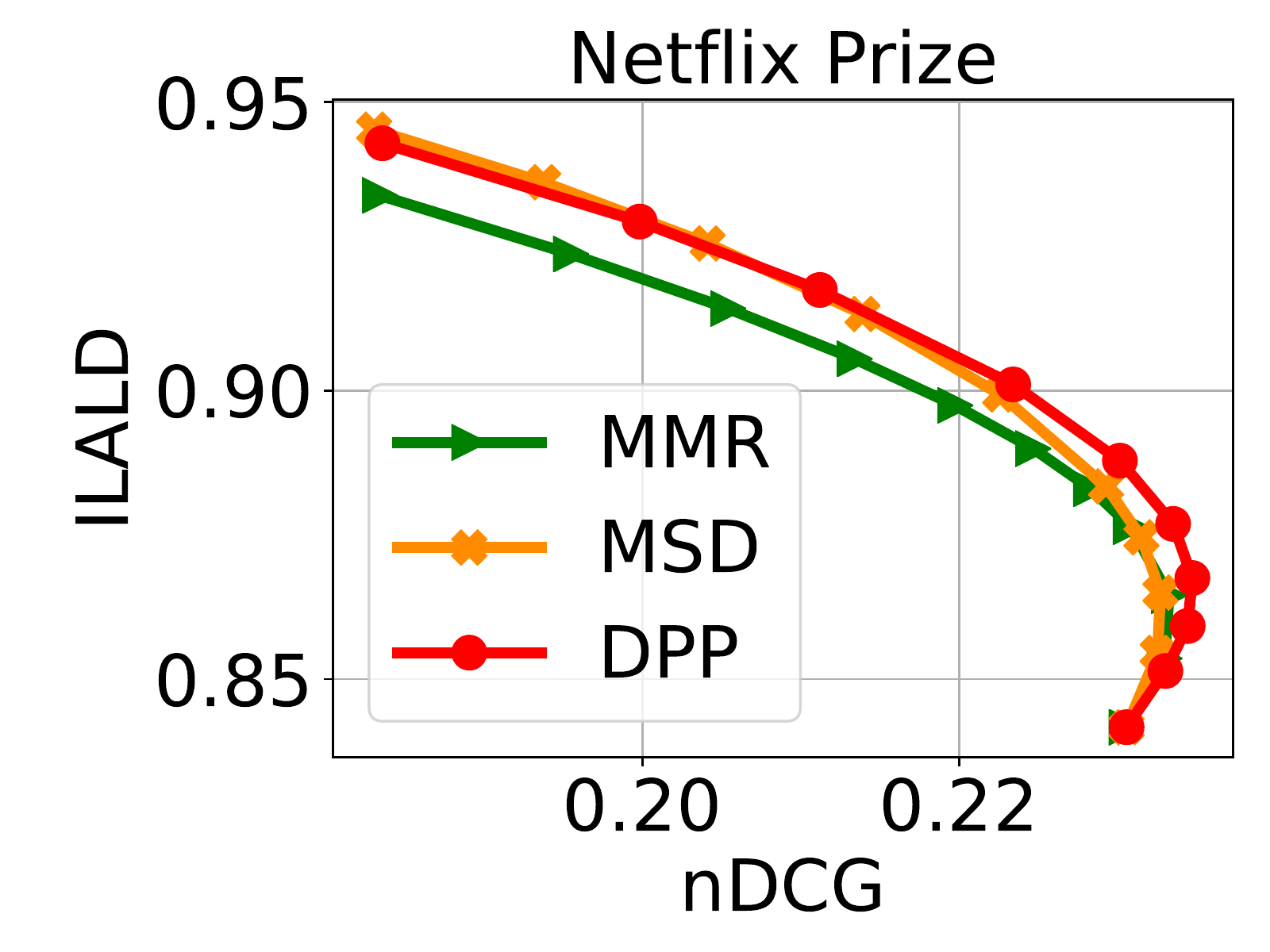}
		\end{minipage}
		\begin{minipage}{.24\textwidth}
			\includegraphics[width=1.35in]{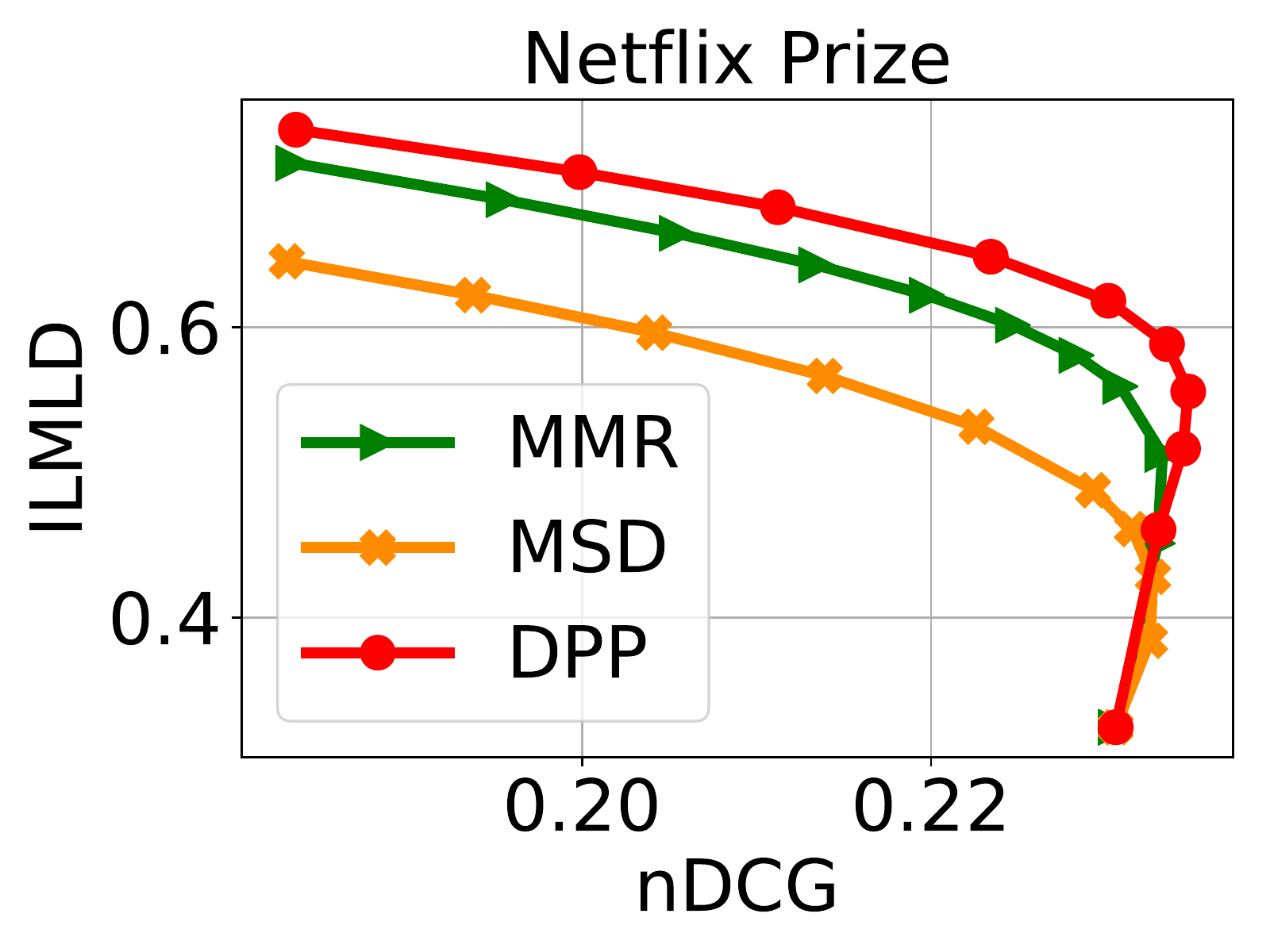}
		\end{minipage}
		\begin{minipage}{.24\textwidth}
			\includegraphics[width=1.35in]{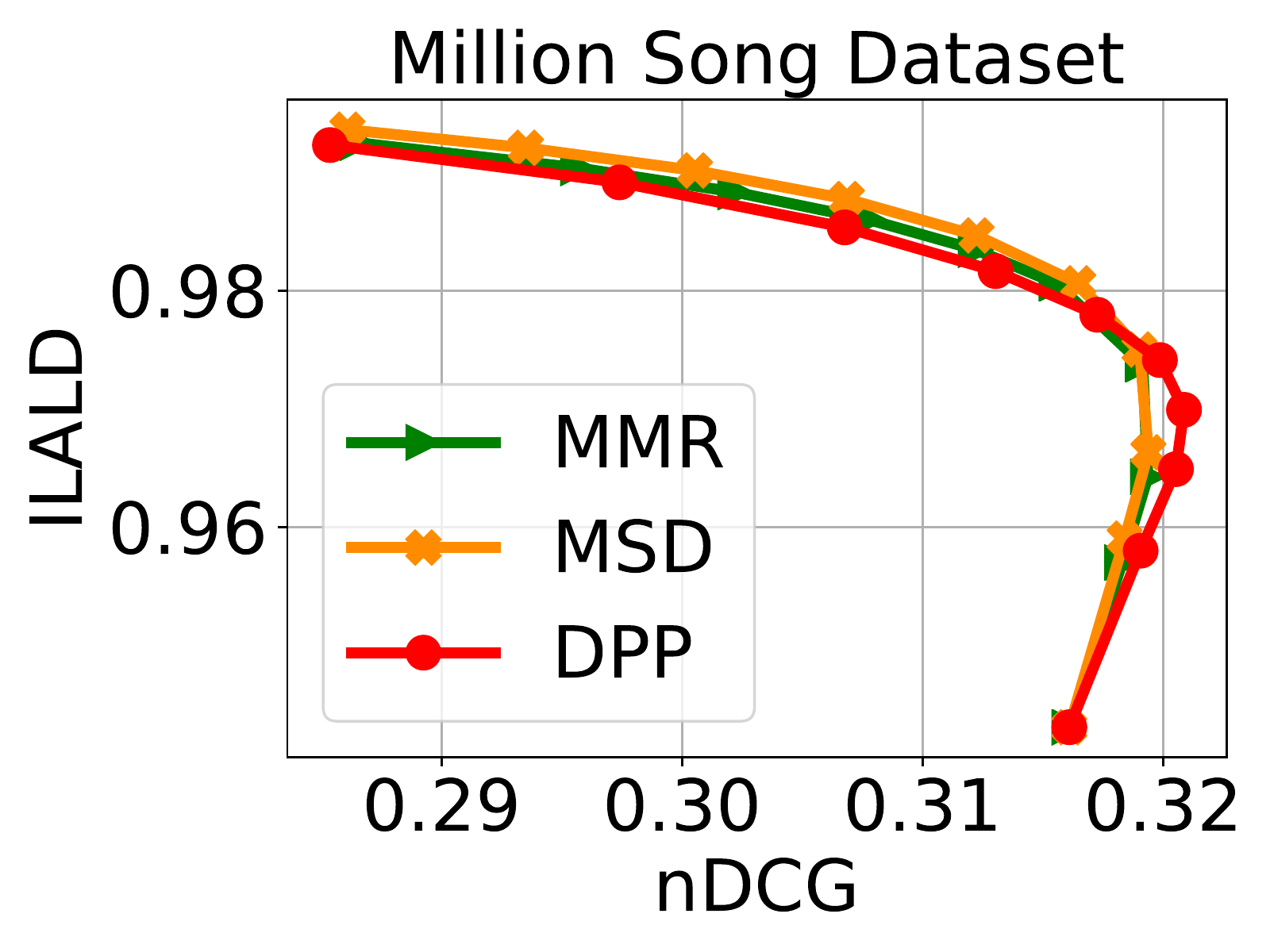}
		\end{minipage}
		\begin{minipage}{.24\textwidth}
			\includegraphics[width=1.35in]{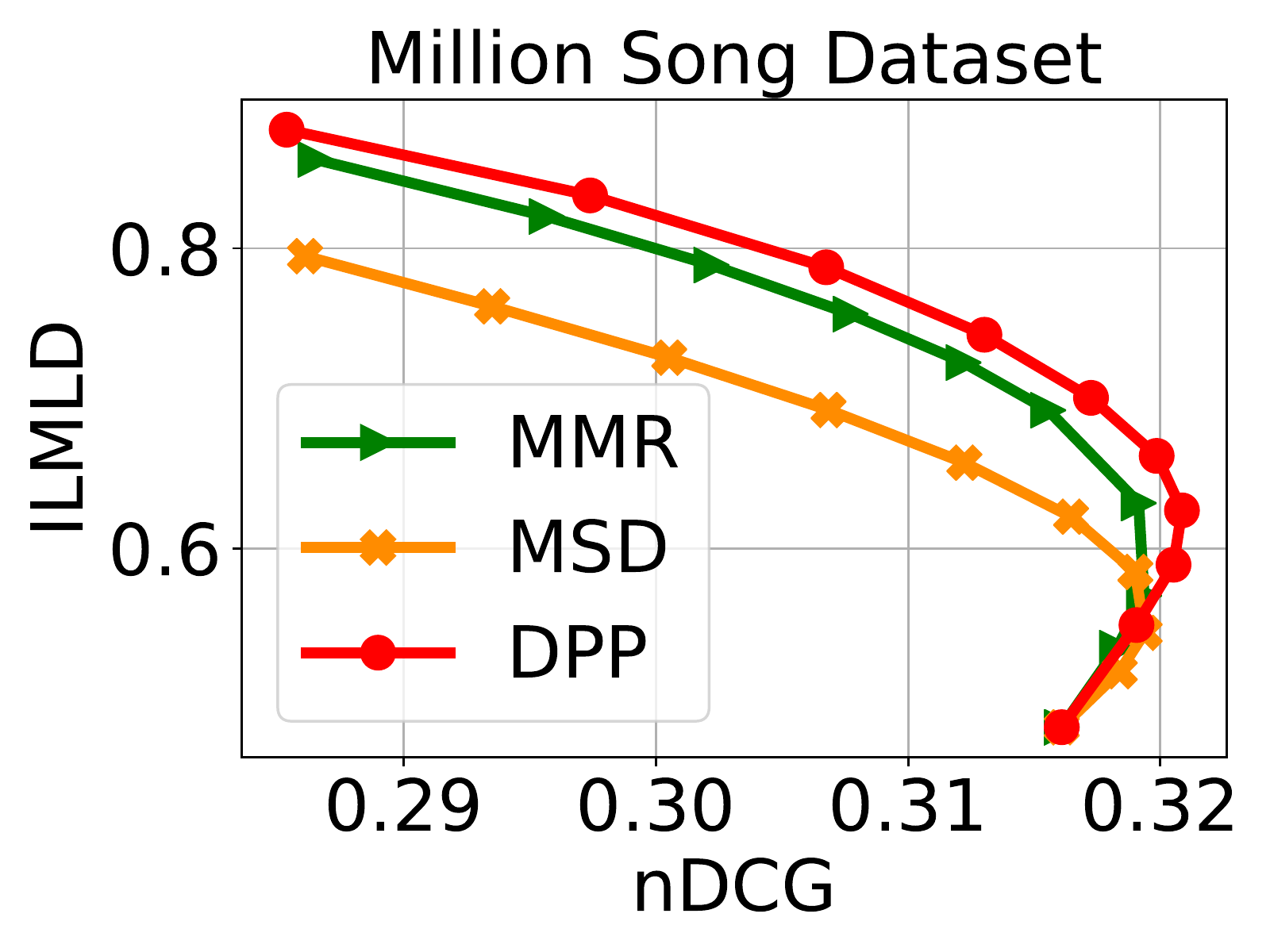}
		\end{minipage}
		\caption{Comparison of trade-off performance between relevance and diversity
			under different choices of trade-off parameters on Netflix Prize (left) and Million Song Dataset (right).
			The standard error of nDCG is about $0.00025$ and $0.0005$ on Netflix Prize and Million Song Dataset, respectively.}
		\label{five:out}
	\end{center}
	\vskip -0.1in
\end{figure}

In this subsection, we evaluate the performance of Algorithm~\ref{diverse:nearby} to recommend long sequences of items to users.
For each dataset, we construct the test set by randomly selecting $5$ interacted items for each user, and use the rest for training.
Each long sequence contains $N=100$ items.
We choose window size $w=10$ so that every $w$ successive items in the sequence are diverse.
Other settings are the same as in the previous subsection.

Performance metrics include normalized discounted cumulative gain (nDCG) \cite{jarvelin2000ir},
intra-list average local distance (ILALD), and intra-list minimal local distance (ILMLD).
The latter two are defined as
\begin{displaymath}
  \mathrm{ILALD}=\mean_{u\in U}\mean_{i,j\in R_u,i\ne j,d_{ij}\le w}(1-{\bf S}_{ij}),\quad
  \mathrm{ILMLD}=\mean_{u\in U}\min_{i,j\in R_u,i\ne j,d_{ij}\le w}(1-{\bf S}_{ij}),
\end{displaymath}
where $d_{ij}$ is the position distance of item $i$ and $j$ in $R_u$.
To make a fair comparison, we modify the diversity terms in MMR and MSD so that they only consider the most recently added $w-1$ items.
Entropy and Cover are not tested because they are not suitable for this scenario.
By varying trade-off parameters, the trade-off performance between relevance and diversity of
MMR, MSD, and DPP are compared in Figure~\ref{five:out}.
The parameters are chosen such that different algorithms have approximately the same range of nDCG.
As can be seen, DPP performs the best with respect to relevance-diversity trade-off.
We also compare the metric PW Recall in the supplementary material.

\section{Conclusion and Future Work}

In this paper, we presented a fast and exact implementation of the greedy MAP inference for DPP.
The time complexity of our algorithm is $O(M^3)$, which is significantly lower than state-of-the-art exact implementations.
Our proposed acceleration technique can be applied to other problems with log-determinant of PSD matrices in the objective functions,
such as the entropy regularizer \cite{qin2013promoting}.
We also adapted our fast algorithm to scenarios where the diversity is only required within a sliding window.
Experiments showed that our algorithm runs significantly faster than state-of-the-art algorithms,
and our proposed approach provides better relevance-diversity trade-off on recommendation task.
A potential future research direction is to learn the optimal trade-off parameter automatically.

\bibliography{reference}
\bibliographystyle{plain}

\newpage
\pagebreak

\appendix

\begin{center}
	{\LARGE Supplementary Material}
\end{center}

\vskip 0.1in

\section{Update of $\bf V$, ${\bf c}_i$, and $d_i$}

Hereinafter, we use $Y_{\mathrm{g}}^w$ to denote the set after the earliest added item $h$ is removed.
We will show how to update $\bf V$, ${\bf c}_i$, and $d_i$ when $h$ is removed from $\{h\}\cup Y_{\mathrm{g}}^w$.

\subsection{Update of ${\bf V}$}

Since the Cholesky factor of ${\bf L}_{\{h\}\cup Y_{\mathrm{g}}^w}$ is $\bf V$,
\begin{equation}\label{sub:mat}
  \begin{bmatrix}
    {\bf L}_{hh} & {\bf L}_{h,Y_{\mathrm{g}}^w} \\
    {\bf L}_{Y_{\mathrm{g}}^w,h} & {\bf L}_{Y_{\mathrm{g}}^w}
  \end{bmatrix}
  =\begin{bmatrix}
    {\bf V}_{11} & {\bf 0} \\
    {\bf V}_{2:,1} & {\bf V}_{2:}
  \end{bmatrix}
  \begin{bmatrix}
    {\bf V}_{11} & {\bf 0} \\
    {\bf V}_{2:,1} & {\bf V}_{2:}
  \end{bmatrix}^{\top},
\end{equation}
where $2\!:$ denotes the indices starting from $2$ to the end.
For simplicity, hereinafter we use ${\bf v}={\bf V}_{2:,1}$ and $\ddot{\bf V}={\bf V}_{2:}$.
Let ${\bf V}'$ denote the Cholesky factor of ${\bf L}_{Y_{\mathrm{g}}^w}$.
Equ. \eqref{sub:mat} implies
\begin{equation}\label{V:new}
  {\bf V}'{\bf V}'^{\top}={\bf L}_{Y_{\mathrm{g}}^w}=\ddot{\bf V}\ddot{\bf V}^{\top}+{\bf v}{\bf v}^{\top}.
\end{equation}
Since ${\bf V}'$ and $\ddot{\bf V}$ are lower triangular matrices of the same size, Equ. \eqref{V:new} is a classical rank-one update for Cholesky decomposition.
We follow the procedure described in \cite{schott1999matrix} for this problem.
Let
\begin{displaymath}
  {\bf V}'=\begin{bmatrix}
    {\bf V}'_{11} & {\bf 0} \\
    {\bf V}'_{2:,1} & {\bf V}'_{2:}
  \end{bmatrix},\,
  \ddot{\bf V}=\begin{bmatrix}
    \ddot{\bf V}_{11} & {\bf 0} \\
    \ddot{\bf V}_{2:,1} & \ddot{\bf V}_{2:}
  \end{bmatrix},\,
  {\bf v}=\begin{bmatrix}
    {\bf v}_1 \\
    {\bf v}_{2:}
  \end{bmatrix}.
\end{displaymath}
Then Equ. \eqref{V:new} is equivalent to
\begin{align}
  {\bf V}'^2_{11}&=\ddot{\bf V}_{11}^2+{\bf v}_1^2,\label{V11} \\
  {\bf V}'_{2:,1}{\bf V}'_{11}&=\ddot{\bf V}_{2:,1}\ddot{\bf V}_{11}+{\bf v}_{2:}{\bf v}_1,\nonumber \\
  {\bf V}'_{2:}{\bf V}'^{\top}_{2:}+{\bf V}'_{2:,1}{\bf V}'^{\top}_{2:,1}&=\ddot{\bf V}_{2:}\ddot{\bf V}_{2:}^{\top}+\ddot{\bf V}_{2:,1}\ddot{\bf V}_{2:,1}^{\top}+{\bf v}_{2:}{\bf v}_{2:}^{\top}\nonumber
\end{align}
which imply
\begin{align}
  {\bf V}'_{2:,1}&=(\ddot{\bf V}_{2:,1}\ddot{\bf V}_{11}+{\bf v}_{2:}{\bf v}_1)/{\bf V}'_{11},\label{V1:down} \\
  {\bf V}'_{2:}{\bf V}'^{\top}_{2:}&=\ddot{\bf V}_{2:}\ddot{\bf V}_{2:}^{\top}+{\bf v}'{\bf v}'^{\top},\label{V:right} \\
  {\bf v}'&=({\bf v}_{2:}{\bf V}'_{11}-{\bf V}'_{2:,1}{\bf v}_1)/\ddot{\bf V}_{11}.\label{v:re}
\end{align}
The first column of ${\bf V}'$ can be determined by Equ. \eqref{V11} and Equ. \eqref{V1:down}.
For the rest part, notice that Equ. \eqref{V:right} together with Equ. \eqref{v:re} is again a rank-one update but with a smaller size.
We can repeat the aforementioned procedure until the last diagonal element of ${\bf V}'$ is obtained.

\subsection{Update of ${\bf c}_i$}

According to Equ. \eqref{ci:direct},
\begin{displaymath}
  \begin{bmatrix}
    {\bf V}_{11} & {\bf 0} \\
    {\bf v} & \ddot{\bf V}
  \end{bmatrix}
  \begin{bmatrix}
    {\bf c}_{i,1} & {\bf c}_{i,2:}
  \end{bmatrix}^{\top}
  =\begin{bmatrix}
    {\bf L}_{hi} \\
    {\bf L}_{Y_{\mathrm{g}}^w,i}
  \end{bmatrix},
\end{displaymath}
where ${\bf c}_{i,1}$ denotes the first element of ${\bf c}_i$ and ${\bf c}_{i,2:}$ is the remaining sub-vector.
Let $a_i={\bf c}_{i,1}$ and $\ddot{\bf c}_i={\bf c}_{i,2:}$.
Define ${\bf c}'_i$ as the vector of item $i$ after $h$ is removed.
Then
\begin{equation}\label{ci:new}
  {\bf V}'{\bf c}'^{\top}_i={\bf L}_{Y_{\mathrm{g}}^w,i}=\ddot{\bf V}\ddot{\bf c}_i^{\top}+{\bf v}a_i.
\end{equation}
Let
\begin{displaymath}
  {\bf c}'_i=\begin{bmatrix} {\bf c}'_{i,1} & {\bf c}'_{i,2:} \end{bmatrix},\,
  \ddot{\bf c}_i=\begin{bmatrix} \ddot{\bf c}_{i,1} & \ddot{\bf c}_{i,2:} \end{bmatrix}.
\end{displaymath}
Then Equ. \eqref{ci:new} is equivalent to
\begin{align*}
  {\bf c}'_{i,1}{\bf V}'_{11}&=\ddot{\bf c}_{i,1}\ddot{\bf V}_{11}+a_{i}{\bf v}_1, \\
  {\bf V}'_{2:}{\bf c}'^{\top}_{i,2:}+{\bf V}'_{2:,1}{\bf c}'_{i,1}&=\ddot{\bf V}_{2:}\ddot{\bf c}_{i,2:}^{\top}+\ddot{\bf V}_{2:,1}\ddot{\bf c}_{i,1}+{\bf v}_{2:}a_i
\end{align*}
which imply
\begin{align}
  {\bf c}'_{i,1}&=(\ddot{\bf c}_{i,1}\ddot{\bf V}_{11}+a_{i}{\bf v}_1)/{\bf V}'_{11},\label{ci:first} \\
  {\bf V}'_{2:}{\bf c}'^{\top}_{i,2:}&=\ddot{\bf V}_{2:}\ddot{\bf c}_{i,2:}^{\top}+{\bf v}'a'_i,\label{ci:down} \\
  a'_i&=(a_i{\bf V}'_{11}-{\bf c}'_{i,1}{\bf v}_1)/\ddot{\bf V}_{11}.\label{a:re}
\end{align}
The first element of ${\bf c}'_i$ can be determined by Equ. \eqref{ci:first}.
For the rest part, since Equ. \eqref{ci:down} together with Equ. \eqref{v:re} and \eqref{a:re} has the same form as Equ. \eqref{ci:new},
we can repeat the aforementioned procedure until we get the last element of ${\bf c}'_i$.

\subsection{Update of $d_i$}
\label{di:update:nearby}

According to Equ. \eqref{di:direct},
\begin{displaymath}
  d_i^2={\bf L}_{ii}-a_i^2-\|\ddot{\bf c}_i\|_2^2.
\end{displaymath}
Define ${d}'_i$ as the scalar of item $i$ after $h$ is removed.
Then
\begin{align}
  d_i^{\prime2}&={\bf L}_{ii}-\|{\bf c}'_i\|_2^2\nonumber \\
  &=d_i^2+a_i^2+\|\ddot{\bf c}_i\|_2^2-\|{\bf c}'_i\|_2^2\label{di:new} \\
  &=d_i^2+a_i^2+\ddot{\bf c}_{i,1}^2+\|\ddot{\bf c}_{i,2:}\|_2^2-{\bf c}^{\prime2}_{i,1}-\|{\bf c}'_{i,2:}\|_2^2\nonumber \\
  &=d_i^2+a^{\prime2}_i+\|\ddot{\bf c}_{i,2:}\|_2^2-\|{\bf c}'_{i,2:}\|_2^2\label{di:new:new}
\end{align}
where Equ. \eqref{di:new:new} is due to
\begin{align*}
  a_i^2+\ddot{\bf c}_{i,1}^2-{\bf c}^{\prime2}_{i,1}
  &\stackrel{\mathrm{Equ}. \eqref{ci:first}}{=\joinrel=\joinrel=\joinrel=}a_i^2+({\bf c}'_{i,1}{\bf V}'_{11}-a_{i}{\bf v}_1)^2/\ddot{\bf V}_{11}^2-{\bf c}^{\prime2}_{i,1} \\
  &=(a_i^2(\ddot{\bf V}_{11}^2+{\bf v}_1^2)+{\bf c}^{\prime2}_{i,1}({\bf V}_{11}^{\prime2}-\ddot{\bf V}_{11}^2)-2{\bf c}'_{i,1}{\bf V}'_{11}a_{i}{\bf v}_1)/\ddot{\bf V}_{11}^2 \\
  &\stackrel{\mathrm{Equ}. \eqref{V11}}{=\joinrel=\joinrel=\joinrel=}(a_i^2{\bf V}_{11}^{\prime2}+{\bf c}^{\prime2}_{i,1}{\bf v}_1^2-2{\bf c}'_{i,1}{\bf V}'_{11}a_{i}{\bf v}_1)/\ddot{\bf V}_{11}^2 \\
  &\stackrel{\mathrm{Equ}. \eqref{a:re}}{=\joinrel=\joinrel=\joinrel=}a^{\prime2}_i.
\end{align*}
Notice that Equ. \eqref{di:new:new} has the same form as Equ. \eqref{di:new}.
Therefore, after ${\bf c}'_i$ has been updated, we can directly get $d_i^{\prime2}=d_i^2+a'^2_i$.

\section{Discussion on Numerical Stability}

As introduced in Section~\ref{sec:fast}, updating ${\bf c}_i$ and $d_i$ involves calculating $e_i$, where $d_j^{-1}$ is involved.
If $d_j$ is approximately zero, our algorithm encounters the numerical instability issue.
According to Equ. \eqref{di:intep}, $d_j$ satisfies
\begin{equation}\label{dj:intep}
  d_j^2=\frac{\mathrm{det}({\bf L}_{Y_{\mathrm{g}}\cup\{j\}})}{\mathrm{det}({\bf L}_{Y_{\mathrm{g}}})}.
\end{equation}
Let $j_k$ be the selected item in the $k$-th iteration.
Theorem~\ref{thm:dj} gives some results about the sequence $\{d_{j_k}\}$.

\begin{theorem}\label{thm:dj}
  In Algorithm~\ref{sup:fast}, $\{d_{j_k}\}$ is non-increasing, and $d_{j_k}>0$ if and only if $k\le\mathrm{rank}({\bf L})$.
\end{theorem}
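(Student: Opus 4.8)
The plan is to handle the two assertions with the determinant identity \eqref{di:intep}, writing $Y_k=\{j_1,\dots,j_k\}$ for the items chosen after $k$ iterations (with $Y_0=\emptyset$) and $r=\mathrm{rank}({\bf L})$. Monotonicity I would get directly from the in-place update \eqref{di:update}: since $d_i^{\prime2}=d_i^2-e_i^2\le d_i^2$, the stored value of any fixed item can only decrease from one iteration to the next. The item $j_{k+1}$ is still unselected during iteration $k$, so its stored value then is at most $d_{j_k}^2$ (as $j_k$ is the $\argmax$ over exactly the available items in iteration $k$); the update preceding iteration $k+1$ only lowers it further, and that lowered value is precisely $d_{j_{k+1}}^2$. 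Chaining the two inequalities gives $d_{j_{k+1}}^2\le d_{j_k}^2$, so $\{d_{j_k}\}$ is non-increasing.

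For the positivity claim I would first reduce it to a linear-independence statement. Applying \eqref{di:intep} with $Y_{\mathrm{g}}=Y_{l-1}$ and $i=j_l$ and telescoping (using $\det({\bf L}_\emptyset)=1$) yields $\det({\bf L}_{Y_k})=\prod_{l=1}^{k}d_{j_l}^2$. Because the factors are nonnegative and, by the monotonicity just proved, non-increasing, the product is positive exactly when its smallest factor $d_{j_k}^2$ is, so $d_{j_k}>0\iff\det({\bf L}_{Y_k})>0$. Passing to the Gram representation ${\bf L}={\bf B}^{\top}{\bf B}$ afforded by PSD-ness, with columns ${\bf b}_1,\dots,{\bf b}_M$ spanning an $r$-dimensional space, $\det({\bf L}_{Y_k})$ is the Gram determinant of $\{{\bf b}_i:i\in Y_k\}$ and is positive iff those vectors are linearly independent. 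Thus the theorem reduces to showing $\{{\bf b}_i:i\in Y_k\}$ is independent iff $k\le r$.

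The forward reduction ``$k\le r\Rightarrow$ independent'' is where the greedy rule enters, and I would argue it inductively: if $Y_{k-1}$ is independent with $k-1<r$, its span is a proper subspace of $\mathrm{range}({\bf B})$, so some unselected ${\bf b}_i$ lies outside it; then $\det({\bf L}_{Y_{k-1}\cup\{i\}})>0$, hence $d_i^2>0$, hence the $\argmax$ selects $j_k$ with $d_{j_k}^2>0$, which by the telescoping identity keeps $Y_k$ independent and sustains the induction (the base case $Y_0=\emptyset$ being vacuous, with some ${\bf b}_i\ne{\bf 0}$ when $r\ge1$). The reverse ``independent $\Rightarrow k\le r$'' is immediate, since at most $r$ vectors in an $r$-dimensional space can be independent; equivalently, once $r$ independent columns are chosen their span is all of $\mathrm{range}({\bf B})$, forcing $\det({\bf L}_{Y_r\cup\{i\}})=0$ and $d_i^2=0$ for every remaining $i$, so $d_{j_{r+1}}=0$ and, by monotonicity, $d_{j_k}=0$ for all $k>r$.

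I expect the only real friction to be bookkeeping in the degenerate tail: once $d_{j_{r+1}}=0$ the ratio in \eqref{dj:intep} becomes $0/0$ and the incremental recursion underlying \eqref{di:update} divides by $d_j=0$, so the sequence past this point must be read off from the already-collapsed stored values (all zero) together with the established monotonicity, rather than from the Cholesky recursion itself. The two supporting facts I lean on---positive Gram determinant $\iff$ linear independence, and a proper subspace cannot contain a full spanning set---are standard and would be cited rather than reproved.
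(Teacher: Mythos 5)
Your proof is correct, and while your monotonicity argument is essentially identical to the paper's (argmax optimality at step $k$ plus the fact that the update \eqref{di:update} only decreases stored values), your treatment of the positivity claim takes a genuinely different route. The paper proves $\det({\bf L}_{V_k})>0$ for $k\le\mathrm{rank}({\bf L})$ by invoking Theorem 3.3 of \cite{ccivril2009selecting} as a black box: the greedy set's determinant is at least $(1/k!)^2$ times the maximum determinant over all $k$-subsets, and the latter is positive when $k\le\mathrm{rank}({\bf L})$; then $d_{j_k}^2=\det({\bf L}_{V_k})/\det({\bf L}_{V_{k-1}})>0$, and singularity of ${\bf L}_{V_{r+1}}$ gives $d_{j_{r+1}}=0$. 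You instead telescope \eqref{di:intep} into $\det({\bf L}_{Y_k})=\prod_{l=1}^{k}d_{j_l}^2$, pass to the Gram representation ${\bf L}={\bf B}^\top{\bf B}$, and run an induction showing the greedy rule preserves linear independence: as long as $k-1<r$, some unselected column lies outside $\mathrm{span}\{{\bf b}_i:i\in Y_{k-1}\}$, giving a candidate with $d_i^2>0$, so the argmax cannot pick a degenerate item. Your argument is self-contained and more elementary---it uses only the equivalence between positive Gram determinants and linear independence, rather than a quantitative approximation guarantee whose $(1/k!)^2$ factor is far stronger than the bare positivity actually needed---and it makes transparent \emph{why} greedy stays nonsingular (it maintains independence whenever independence is maintainable). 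The paper's citation buys brevity at the cost of an external dependency. Both proofs share the same minor looseness in the degenerate tail $k>r+1$, where the recursion divides by $d_j=0$ and the stored values are no longer defined by the Cholesky updates; you at least flag this explicitly, which the paper does not.
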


\begin{proof}
First, in the $k$-th iteration, since $j_k$ is the solution to Opt.~\eqref{dpp:fast}, $d_{j_k}\ge d_{j_{k+1}}$.
After $j_k$ is added, $d_{j_{k+1}}$ does not increase after update Equ.~\eqref{di:update}.
Therefore, sequence $\{d_{j_k}\}$ is non-increasing.

Now we prove the second part of the theorem.
Let $V_k\subseteq Z$ be the items that have been selected by Algorithm~\ref{sup:fast} at the end of the $k$-th step.
Let $W_k\subseteq Z$ be a set of $k$ items such that $\mathrm{det}({\bf L}_{W_k})$ is maximum.
According to Theorem 3.3 in \cite{ccivril2009selecting}, we have
\begin{displaymath}
  \mathrm{det}({\bf L}_{V_k})\ge\left(\frac1{k!}\right)^2\cdot\mathrm{det}({\bf L}_{W_k}).
\end{displaymath}
When $k\le\mathrm{rank}({\bf L})$, $W_k$ satisfies $\mathrm{det}({\bf L}_{W_k})>0$, and therefore
\begin{displaymath}
  \mathrm{det}({\bf L}_{V_k})>0, \quad k\le\mathrm{rank}({\bf L}).
\end{displaymath}
According to Equ. \eqref{dj:intep}, we have
\begin{displaymath}
  d_{j_k}^2=\frac{\mathrm{det}({\bf L}_{V_k})}{\mathrm{det}({\bf L}_{V_{k-1}})}.
\end{displaymath}
As a result, for $k=1,\dots,\mathrm{rank}({\bf L})$, $d_{j_k}>0$.
When $k=\mathrm{rank}({\bf L})+1$, $V_k$ contains $\mathrm{rank}({\bf L})+1$ items, and ${\bf L}_{V_k}$ is singular.
Therefore, $d_{j_k}=0$ for $k=\mathrm{rank}({\bf L})+1$.
\end{proof}

According to Theorem~\ref{thm:dj}, when kernel $\bf L$ is a low rank matrix,
Algorithm~\ref{sup:fast} returns at most $\mathrm{rank}({\bf L})$ items.
For Algorithm~\ref{diverse:nearby} with a sliding window,
according to Subsection \ref{di:update:nearby},
$d_i$ is non-decreasing after the earliest item is removed.
This allows for returning more items, and alleviates the numerical instability issue.

\section{More Simulation Results}

We have also compared the metric popularity-weighted recall (PW Recall) \cite{steck2011item} of different algorithms.
Its definition is
\begin{displaymath}
  \mathrm{PW\ Recall}=\frac{\sum_{u\in U}\sum_{t\in T_u}w_{t}\mathbb{I}_{t\in R_u}}{\sum_{u\in U}\sum_{t\in T_u}w_{t}},
\end{displaymath}
where $T_u$ is the set of relevant items in the test set, $w_t$ is the weight of item $t$
with $w_t\propto C(t)^{-0.5}$ where $C(t)$ is the number of occurrences of $t$ in the training set,
and $\mathbb{I}_P$ is the indicator function.
PW Recall measures both relevance and diversity.

\begin{figure}[t]
	\begin{center}
		\includegraphics[width=1.7in]{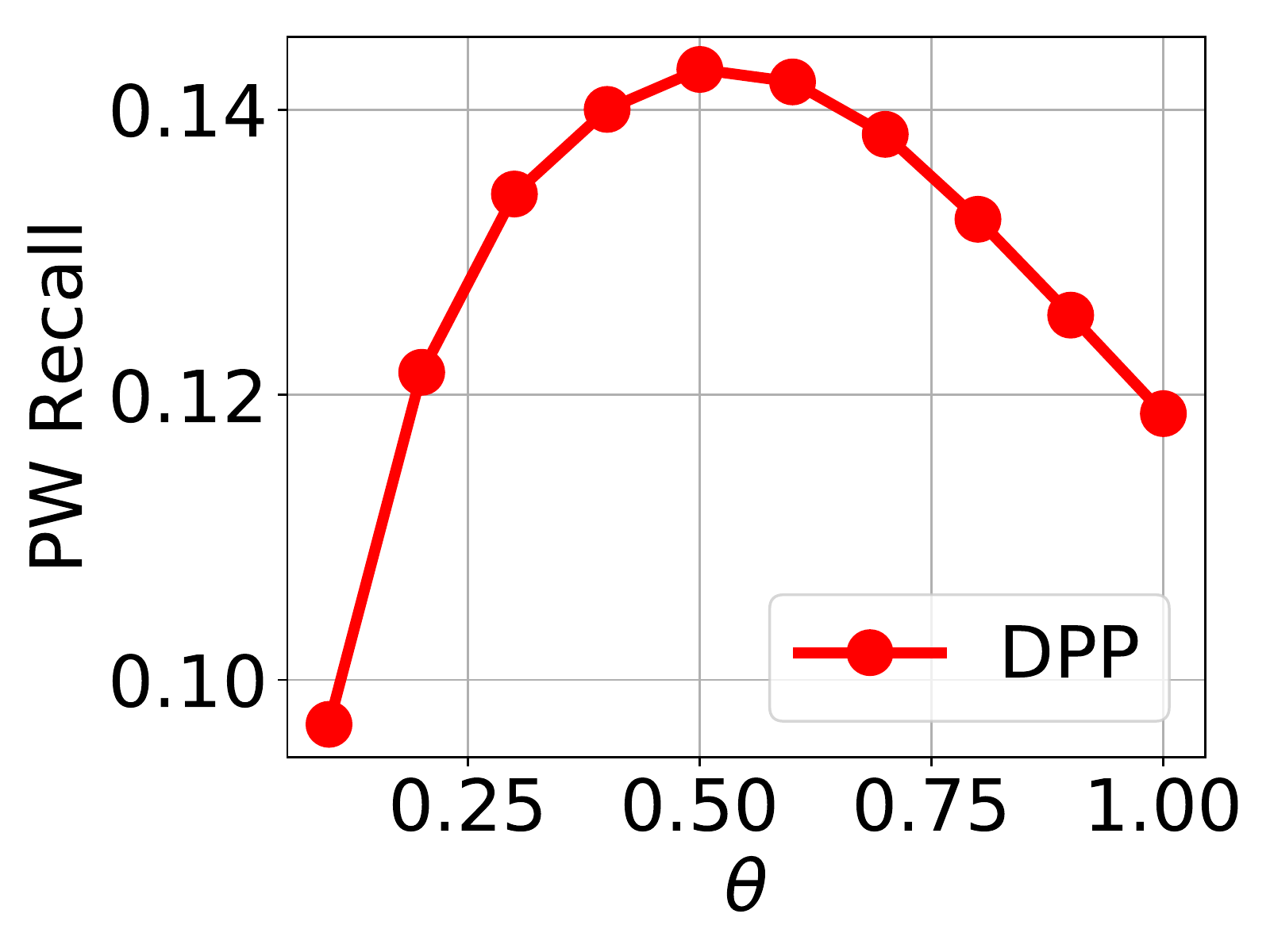}
		\caption{Impact of trade-off parameter $\theta$ on PW Recall on Netflix dataset.}
		\label{theta:compare:pw_recall}
	\end{center}
	\vskip -0.1in
\end{figure}

\begin{figure}[t]
	\begin{center}
		\begin{minipage}{.35\textwidth}
			\includegraphics[width=1.7in]{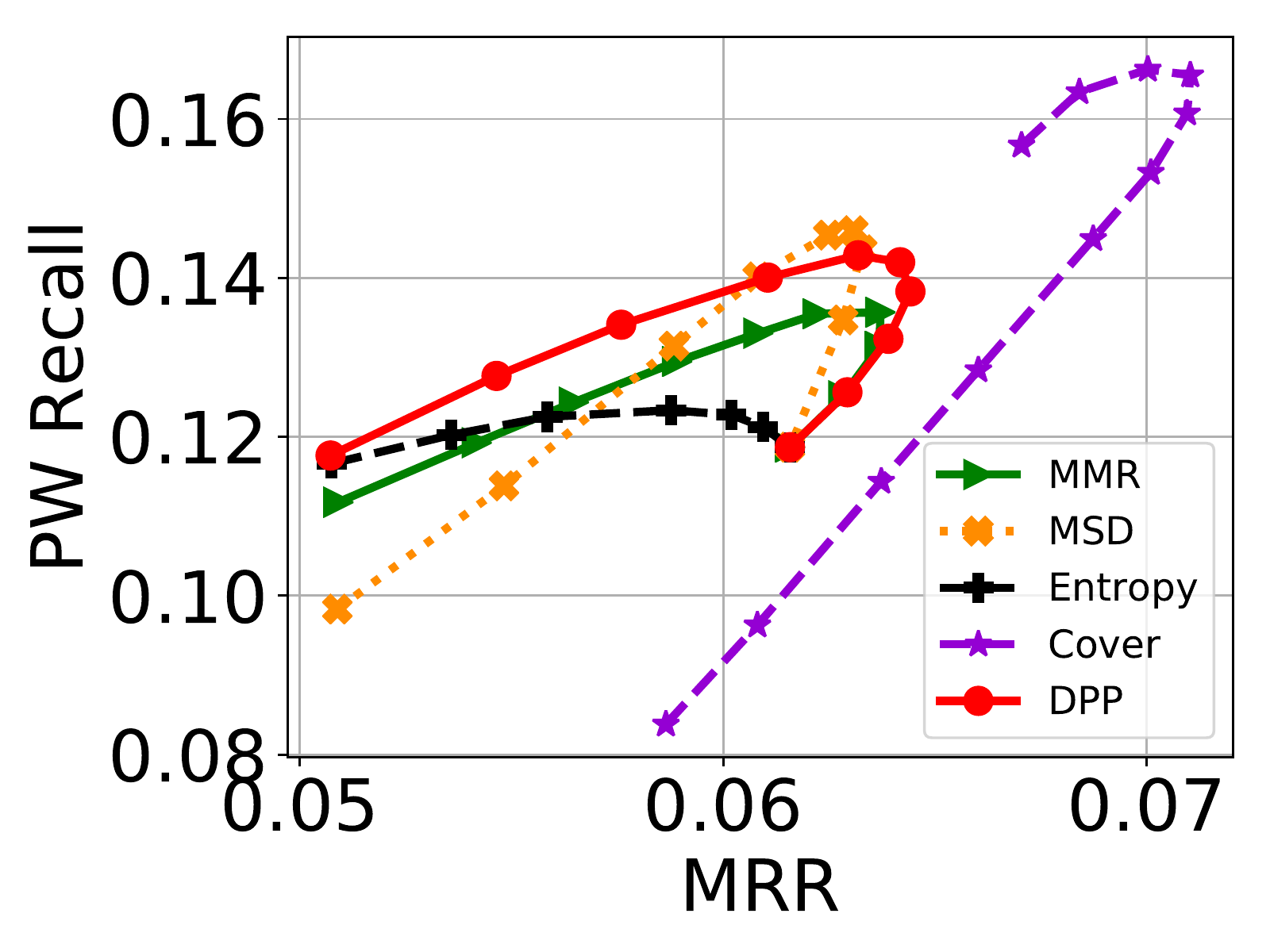}
		\end{minipage}
		\begin{minipage}{.35\textwidth}
			\includegraphics[width=1.7in]{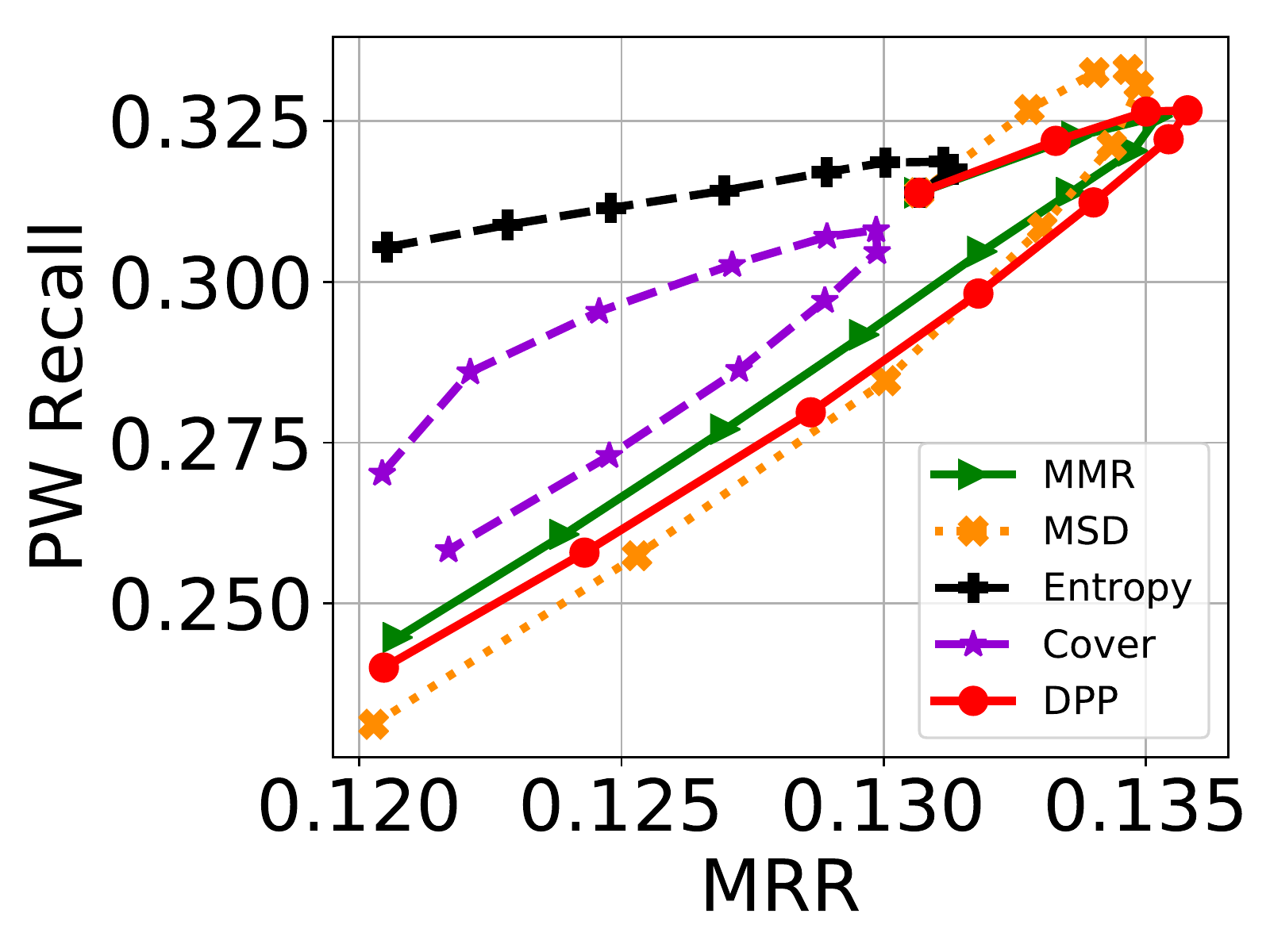}
		\end{minipage}
		\caption{Comparison of trade-off performance between MRR and PW Recall
			under different choices of trade-off parameters
			on Netflix Prize (left) and Million Song Dataset (right).}
		\label{one:out:pw_recall}
	\end{center}
	\vskip -0.05in
\end{figure}

\begin{figure}[!t]
	\begin{center}
		\begin{minipage}{.35\textwidth}
			\includegraphics[width=1.7in]{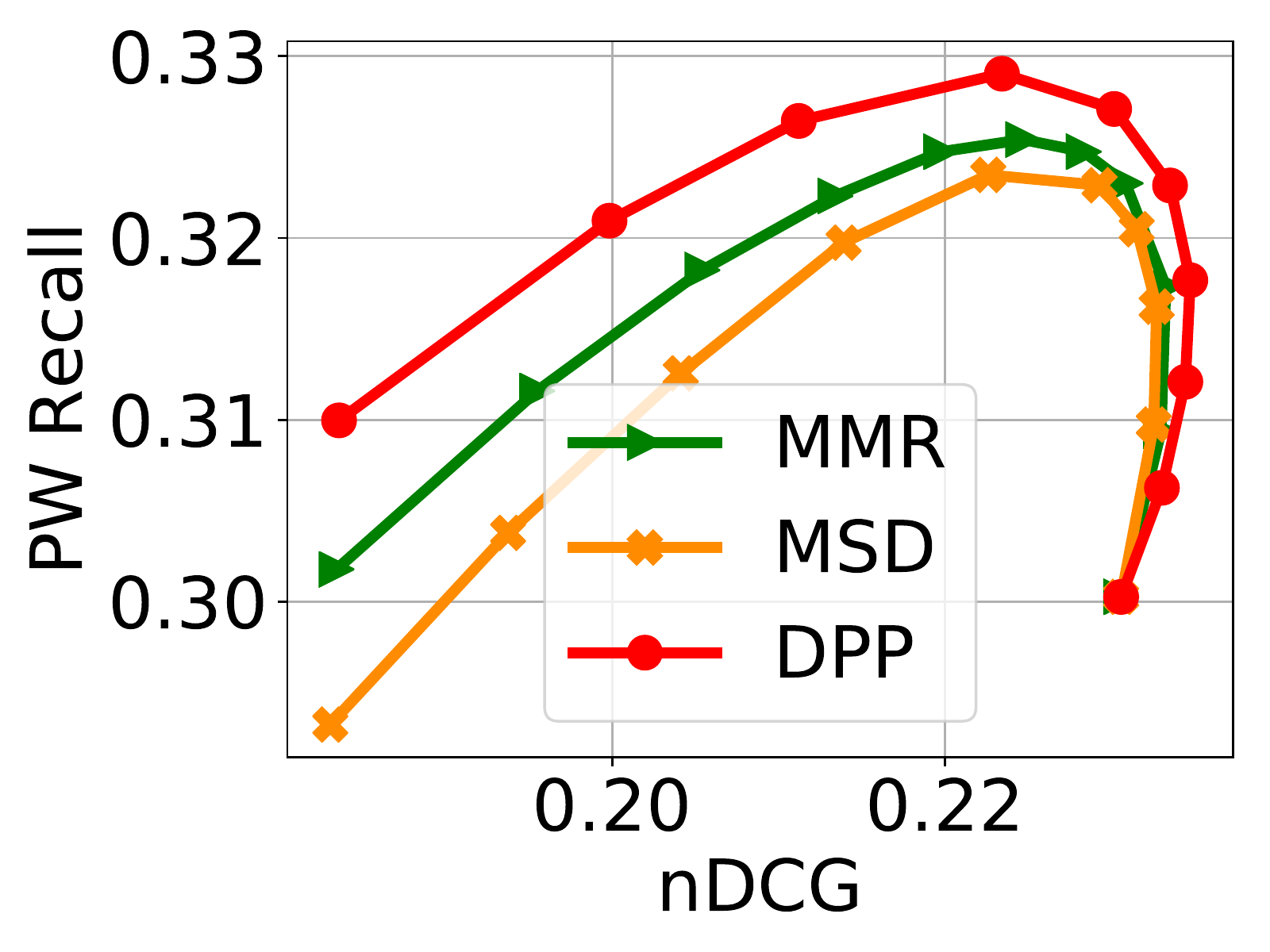}
		\end{minipage}
		\begin{minipage}{.35\textwidth}
			\includegraphics[width=1.7in]{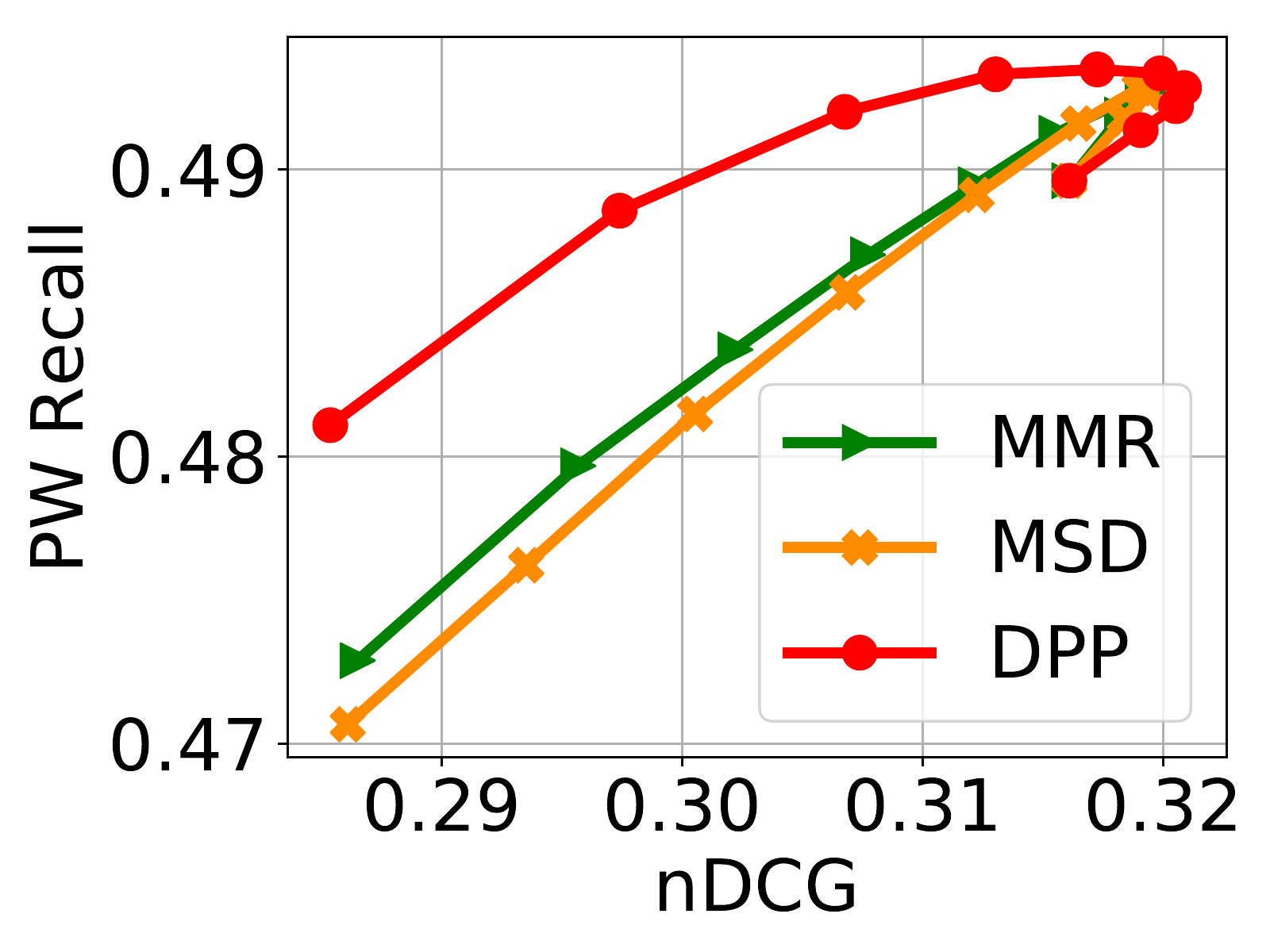}
		\end{minipage}
		\caption{Comparison of trade-off performance between nDCG and PW Recall
			under different choices of trade-off parameters on Netflix Prize (left) and Million Song Dataset (right).}
		\label{five:out:pw_recall}
	\end{center}
	\vskip -0.1in
\end{figure}

Similar to Figure~\ref{theta:compare}, the impact of trade-off parameter $\theta$ on PW Recall
on Netflix Prize is shown in Figure~\ref{theta:compare:pw_recall}.
As $\theta$ increases, MRR improves at first, achieves the best value when $\theta\approx0.5$, and then decreases.
Therefore, moderate amount of diversity also leads to better PW Recall.

Similar to Figure~\ref{one:out}, by varying the trade-off parameters,
the trade-off performance between MRR and PW Recall are compared in Figure~\ref{one:out:pw_recall}.
Similar conclusions can be drawn.

Similar to Figure~\ref{five:out}, by varying the trade-off parameters,
the trade-off performance between nDCG and PW Recall are compared in Figure~\ref{five:out:pw_recall}.
DPP enjoys the best trade-off performance on both datasets.

\end{document}